\pgfplotsset{width=7cm,compat=1.8}  
\renewcommand{\ALG@name}{Algorithm}
\newcommand{\Scal}{\mathcal{S}}
\newcommand{\Ycal}{\mathcal{Y}}
\newcommand{\Zcal}{\mathcal{Z}}
\newcommand{\supp}{\operatorname{supp}}
\DeclareMathOperator{\E}{\mathbb{E}}
\DeclareMathOperator*{\argmin}{arg\,min}
\DeclareMathOperator*{\argmax}{arg\,max}
\DeclareMathOperator*{\bigtimes}{\textnormal{\Large $\times$}}
\newtheorem{theorem}{Theorem}
\newtheorem{proposition}{Proposition}
\newtheorem{conjecture}{Conjecture}
\theoremstyle{definition}
\newtheorem{example}{Example}
\definecolor{ciColor}{HTML}{89E894}
\definecolor{siColor}{HTML}{BADD30} 
\definecolor{uiYColor}{HTML}{34DDDD} 
\definecolor{uiZColor}{HTML}{C4FCFC} 
\pgfplotsset{every axis legend/.style={
		cells={anchor=center},
		inner xsep=3pt,inner ysep=2pt,nodes={inner sep=2pt,text depth=0.15em},
		anchor=north west,
		shape=rectangle,
		fill=white,
		draw=black,
		font=\footnotesize,
		at={(rel axis cs:0.02,0.98)}
	}
}
\pgfplotsset{scaled y ticks=false}  
\newcommand{\matindex}[1]{\mbox{\scriptsize#1}}
\title{\bf Computing the Unique Information} 
\author[1]{Pradeep Kr.~Banerjee}
\author[1]{Johannes Rauh}
\author[1,2]{Guido Mont\'ufar}
\affil[1]{Max Planck Institute for Mathematics in the Sciences, Leipzig, Germany}
\affil[2]{Departments of Mathematics and Statistics, UCLA, USA}
\date{}
\begin{document}

\maketitle

\begin{abstract} 
Given a pair of predictor variables and a response variable, how much information do the predictors have about the response, and how is this information distributed between unique, redundant, and synergistic components? Recent work has proposed to quantify the unique component of the decomposition as the minimum value of the conditional mutual information over a constrained set of information channels. We present an efficient iterative divergence minimization algorithm to solve this optimization problem with convergence guarantees and evaluate its performance against other techniques.  
\smallskip

\noindent
\emph{Keywords:} decomposition of mutual information, synergy, redundancy, alternating divergence minimization, I-projection

\end{abstract}

\section{Introduction}

When Shannon proposed to use entropy in order to quantify information, he had in mind a very specific setting of communication over a noisy channel. Since then, the use of entropic quantities has been greatly expanded, with successful applications in statistical physics, complex systems, neural networks, and machine learning. In particular, transfer entropy is used as a tool to study causality in dynamical systems~\citep{TE_Schreiber}, and mutual information as a criterion in feature selection~\citep{VergaraEstevez14:Review_MI_feature_selection}. In many applications, the effort of estimating entropic quantities, which may be considerable, is out-weighed by the performance gain.

Despite the success of information theory, there are still many open questions about the nature of information. In particular, since information is not a conservation quantity, it is difficult to describe how information is distributed over composite systems. Clearly, different subsystems may have exclusive (or unique) information, or they may have redundant information. Moreover, synergy effects complicate the analysis: It may happen that some information is not known to any subsystem but can only be recovered from knowledge of the entire system. An example is the checksum of several digits, which can only be computed when all digits are known. Such synergy effects abound in cryptography, where the goal is that the encrypted message alone contains no information about the original message without knowledge of the key. It is also believed that synergy plays a major role in the neural code~\citep{LathamNirenberg05:Synergy_and_redundancy_revisited}. 

In spite of their conceptual importance, so far there is no consensus on how to measure or extract the unique, shared, and synergistic portions of joint information, even though there have been several proposals~\citep[e.g.,][]{McGill1954,Bell2003}. \citet{WilliamsBeer} proposed a principled approach to decomposing the total mutual information of a system into positive components corresponding to a lattice of subsystems. This was followed up by the axiomatic approach from~\cite{e16042161}, quantifying the unique, shared, and synergistic information based on ideas from decision theory. In the past couple of years, the latter approach has steadily gained currency with applications ranging, for instance, from quantifying the neural code~\citep{pica2017quantifying}, to learning deep representations~\citep{minsyn2017}. We focus on their definitions explained in Section~\ref{sec:quantifying}. 

Although theoretically promising, these definitions involve an optimization problem that complicates experimentation and applications. Indeed, \cite{e16042161} note that the optimization problem, although convex, can be very ill conditioned, 
and difficulties have been reported with out-of-the-box methods or custom implementations either failing to produce the correct results, or taking extremely long to converge. \cite{DOT2017bivariate} compares different approaches to solving the problem using a number of off-the-shelf software packages. They note that subgradients of the objective function do not exist everywhere on the boundary of the feasible set, and hence, algorithms such as projected (sub-)gradient descent and generic active set methods are not particularly good choices for solving the problem.

In Section~\ref{sec:computing}, we derive an alternating divergence minimization algorithm for solving the optimization problem in~\cite{e16042161} with convergence guarantees. It is similar in spirit to the Blahut-Arimoto algorithm~(BAA)~\citep{1054855,arimoto1972} which is also based on alternating optimization. However, there are significant differences, especially in relation to the nature of the constraints. In particular, there is no direct relation between the two algorithms, and we do not see a way to phrase the computation as a capacity or rate distortion computation. In Section~\ref{sec:experiments}, we present computational results comparing our algorithm with other approaches. Matlab and Python implementations of our algorithm are available online\footnote{\url{https://github.com/infodeco/computeUI}}. Our algorithm consistently returns accurate solutions, while still requiring less computation time than other methods. 
We wrap up and give a brief outlook in Section~\ref{sec:discussion}. 
Relevant notations are included in Appendix~\ref{sec:notation}. 

\section{Quantifying the unique information}
\label{sec:quantifying}
While much research has focused on finding an information measure for a single aspect (like synergy), the seminal paper by~\citet{WilliamsBeer} introduced an approach to find a complete decomposition of the total mutual information $I(S ; Y_1,\ldots,Y_k)$ about a signal $S$ that is distributed among a family of random variables $Y_1 ,\ldots, Y_k$. 
Here, the total mutual information is expressed as a sum of non-negative terms with a well-defined interpretation corresponding to the different ways in which information can have aspects of redundant, unique, or synergistic information. For example, in the case $k = 2$, writing $Y_1\equiv Y$ and $Y_2\equiv Z$, the decomposition is of the form
\begin{equation}
I(S; Y,Z) = 
\underbrace{SI(S; Y,Z)}_{\text{shared (redundant)}} 
+ \underbrace{CI(S; Y,Z)}_{\text{complementary (synergistic)}} 
+ \underbrace{UI(S;Y \backslash Z)}_{\text{unique $Y$ wrt $Z$}} 
+ \underbrace{UI(S;Z \backslash Y)}_{\text{unique $Z$ wrt $Y$}}, 
\label{eq:MI-decomposition} 
\end{equation}
where $SI(S;Y,Z)$, $CI(S;Y,Z)$, $UI(S;Y \backslash Z)$, and $UI(S;Z \backslash Y)$ are nonnegative functions that depend continuously on the joint distribution of $(S,Y,Z)$. 
Furthermore, these functions are required to satisfy 
the intuitive equations
\begin{equation}
\label{eq:MI-decomposition-2}
\begin{aligned}
I(S; Y) &= SI(S; Y,Z) + UI(S; Y \backslash Z),\\
I(S; Z) &= SI(S; Y,Z) + UI(S;Z \backslash Y). 
\end{aligned} 
\end{equation}

Combining these equations, it follows that the co-information can be written as the difference of redundant and synergistic information, which agrees with the general interpretation of co-information: 
\begin{equation}
CoI(S; Y; Z) 
:= I(S; Y) - I(S; Y|Z) = SI(S; Y,Z) - CI(S; Y,Z).
\end{equation}
Similarly, the conditional mutual information satisfies
\begin{equation}
I(S; Y|Z) = I(S; Y,Z) - I(S; Z) = CI(S;Y,Z) + UI(S;Y \backslash Z).
\end{equation}
The decomposition is illustrated in Figure~\ref{fig:census}a. 

Although the above framework is very appealing, there is no general agreement on how to define the corresponding functions for shared, unique, and synergistic information. When~\cite{WilliamsBeer} presented their information decomposition framework, they also proposed specific measures. However, their functions have been criticized as overestimating redundant and synergistic information, while underestimating unique information~\citep{GriffithKoch2014:Quantifying_Synergistic_MI}%
\footnote{For example, in the case of two independent variables $Y,Z$ and a copy $S=(Y,Z)$, the measure of Williams and Beer assigns $0$ bit of unique information to $Y$ and~$Z$, and all available information is interpreted as either redundant or synergistic.}. Another proposal of information measures for the bivariate case ($k=2$) that involves information-geometric ideas is presented in~\cite{HarderSalgePolani2013:Bivariate_redundancy}. 

Here we follow the approach from~\citet{e16042161} and use the functions $SI$, $UI$ and $CI$ defined there, since it is the most principled approach, based on ideas from decision theory and having an axiomatic characterization. This approach covers only $k=2$, but situations with larger~$k$ can be analyzed by grouping the variables. The decomposition is based on the idea that unique and shared information, $UI(S;Y \backslash Z)$ and $SI(S; Y,Z)$, should depend only on the marginal distributions of the pairs $(S,Y)$ and $(S,Z)$.  
It gives similar values as the functions defined in~\cite{HarderSalgePolani2013:Bivariate_redundancy}. Incidentally, the bivariate synergy measure derived from this approach agrees with the synergy measure defined by~\cite{GriffithKoch2014:Quantifying_Synergistic_MI} for arbitrary~$k$.  

For some finite state spaces $\Ycal,\Zcal,\Scal$, let $\mathbb{P}_{\Scal\times\Ycal\times\Zcal}$ be the set of all joint distributions of $(S,Y,Z)$. 
Given $P \in \mathbb{P}_{\Scal\times\Ycal\times\Zcal}$, let 
\begin{align}
\Delta_P := \big\{Q \in \mathbb{P}_{\Scal\times\Ycal\times\Zcal}\colon  Q_{SY}(s,y)=P_{SY}(s,y) \text{ and } Q_{SZ}(s,z)=P_{SZ}(s,z)\big\}
\label{eq:delP}
\end{align}
denote the set of joint distributions of $(Y,Z,S)$, that have the same marginals on $(S,Y)$ and $(S,Z)$ as $P$. 
\cite{e16042161} define the unique information that $Y$ conveys about $S$ with respect to $Z$ as 
\begin{align}
UI(S;Y\backslash Z) := \min_{Q \in \Delta_P} I_Q(S;Y|Z). 
\label{eq:uidefinition}
\end{align}
See Appendix~\ref{sec:properties} for a brief discussion of the properties of the unique information explaining why this definition makes sense. 
By~\eqref{eq:MI-decomposition} and~\eqref{eq:MI-decomposition-2}, 
specifying~\eqref{eq:uidefinition} fixes the other three functions in~\eqref{eq:MI-decomposition}, which are then 
\label{subeq:auxdefinitions}
\begin{align}
  UI(S;Z \backslash Y) &:= \min_{Q \in\Delta_P} I_Q(S;Z|Y), \label{subeq:uizdef} \\ 
  SI(S;Y,Z) &:= I(S;Y) - \min_{Q \in \Delta_P} I_Q(S;Y|Z) = \max_{Q\in\Delta_{P}} CoI_{Q}(S;Y;Z), \label{subeq:sidef} \\
  CI(S;Y,Z) &:= SI(S;Y,Z) - CoI(S;Y;Z) = I(S;Y,Z) - \min_{Q\in\Delta_{P}}I_{Q}(S;Y,Z). \label{subeq:cidef}
\end{align}
Since $\Delta_{P}$ is compact and the mutual information is a continuous function, these maxima and minima are all well-defined. 

We briefly illustrate the mutual information decomposition~\eqref{eq:MI-decomposition} by evaluating it on a real data set.
\begin{example}[\emph{The US 1994 census income data set~\citep{Lichman:2013}}]
The task is to relate a list of predictor variables with a binary response variable. The predictors include: sex (binary: Male, Female), age (continuous variable divided into 4 categories: $<24$, $\numrange{24}{35}$, $\numrange{36}{50}$, $>50$), race (5 values: White, Asian-Pac-Islander, Amer-Indian-Eskimo, Black, Other), education level (4 values: Basic-schooling, Attended-HS, Bachelors-and-above, Vocational), occupation (14 values: Tech-support, Craft-repair, Other-service, etc.), and hours-per-week (continuous variable grouped into 2 categories: $\le40$, $>40$). The response is the yearly income, with values $>50$K and $\leq50$K. 

Figure~\ref{fig:census} shows the evaluation of the information decomposition~\eqref{eq:uidefinition}--\eqref{subeq:cidef} on this data set, computed using the algorithm that we will present in Section~\ref{sec:computing}. We see, for instance, that most of the information that race and occupation convey about income, is uniquely in the occupation. On the other hand, education and sex have about equally large shared and complementary components. Age conveys a large unique information about income with respect to sex, as does occupation with respect to hours-per-week. These results appear quite reasonable. They illustrate how the decomposition allows us to obtain a fine-grained quantitative analysis of the relationships between predictors and responses. 
\end{example}

\begin{figure}[tb]
\centering
\subfloat[]{
	\resizebox {.3\textwidth} {!} {
\begin{tikzpicture} 
\useasboundingbox (-2.2,-2.2) rectangle (2.2,2.2);

\begin{scope}
\clip (0,0) circle (2cm);
\draw [fill=ciColor, fill opacity=1] (0,0) circle (2cm);
\draw [fill=uiYColor, fill opacity=1] (-1.25,-1.5) circle (2cm);
\draw [fill=uiZColor, fill opacity=1] (1.25,-1.5) circle (2cm);
\end{scope}

\begin{scope}
\clip (-1.25,-1.5) circle (2cm);
\clip (1.25,-1.5) circle (2cm);
\draw [fill=siColor, fill opacity=1] (0,0) circle (2cm);
\end{scope}

\begin{scope}
\clip (0,0) circle (2cm);
\draw [fill=none, fill opacity=1] (-1.25,-1.5) circle (2cm);
\draw [fill=none, fill opacity=1] (1.25,-1.5) circle (2cm);
\end{scope}
\draw [fill=none, fill opacity=1] (0,0) circle (2cm);

\draw (0,1) node {$CI$};
\draw (-1.25,-.3) node {$UI_{Y\setminus Z}$};
\draw (1.25,-.3) node {$UI_{Z\setminus Y}$};
\draw (0,-1) node {$SI$};

\draw [fill=none, color=white] (0,0) circle (2.2cm);
\end{tikzpicture}}}

\subfloat[]{	
	\begin{tikzpicture}
	\begin{axis}[
	xbar stacked, 
	xmin=0,         
	xmax=1,
	legend style={
		legend columns=4,
		at={(xticklabel cs:0.5)},
		anchor=north,
		draw=none,
		axis line style={draw=none},
		tick style={draw=none}
	},
	legend style={font=\footnotesize},
	area legend,
	xtick=\empty,
	ytick=data,     
	yticklabels from table={\testdata}{Label}  
	]
	\addplot [fill=siColor] table [x=SI, meta=Label,y expr=\coordindex] {\testdata};  
	\addplot [fill=ciColor] table [x=CI, meta=Label,y expr=\coordindex] {\testdata};
	\addplot [fill=uiYColor] table [x=UIy, meta=Label,y expr=\coordindex] {\testdata};
	\addplot [fill=uiZColor] table [x=UIz, meta=Label,y expr=\coordindex] {\testdata};
	\legend{$SI$, $CI$, $UI(S;Y\backslash Z)$, $UI(S;Z\backslash Y)$}
	\end{axis}
	\end{tikzpicture}}
\caption{(a)~Illustration of the decomposition~\eqref{eq:MI-decomposition} of the mutual information of a pair $(Y,Z)$ and $S$ into the complementary (synergistic) information $CI$, the unique information $UI$ of $Y$ with respect to $Z$ and conversely, and the shared (redundant) information $SI$. (b)~Information decomposition evaluated on the US 1994 census data set~\citep{Lichman:2013}: the attributes $Y$ and $Z$ predict the income category $S$ ($>$ or $\le$ \$50K per year). Each bar is normalized by the total mutual information $I(S;Y,Z)$ to highlight the relative values of $SI$, $CI$ and $UI$.}
\label{fig:census}
\end{figure}	

\section{Computing the information decomposition}
\label{sec:computing}
We need to solve only one of the optimization problems~\eqref{eq:uidefinition}--\eqref{subeq:cidef} in order to obtain all the terms in the information decomposition. 
We first note that solving~\eqref{eq:uidefinition}--\eqref{subeq:cidef} is equivalent to solving an equivalent convex minimization problem, namely, that for a function which we call the \emph{union information}, defined as follows. 
\begin{align}
I_\cup(S;Y,Z) &:= I(S;Y,Z) - CI(S;Y,Z) = \min_{Q \in\Delta_P} I_Q(S;Y,Z).\label{eq:unioninfo}
\end{align}

We first note that this is a convex minimization problem: 
\begin{proposition}
	\label{prop:eqvoptm}
	The optimization problems~\eqref{eq:uidefinition}--\eqref{subeq:cidef} and~\eqref{eq:unioninfo} are convex. Also, they are equivalent in that a distribution $Q$ solves one of them if and only if it solves all of them.  
\end{proposition}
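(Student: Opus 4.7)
The plan is to handle convexity and equivalence jointly by reducing all five optimization problems to one another via the chain rule. First, I would note that $\Delta_P$ is the intersection of the probability simplex on $\Scal\times\Ycal\times\Zcal$ with the affine subspace cut out by the linear marginal equations $Q_{SY}=P_{SY}$ and $Q_{SZ}=P_{SZ}$, so $\Delta_P$ is convex (and compact). The central observation is that every marginal that is determined by $Q_{SY}$ or $Q_{SZ}$ alone—in particular $Q_S$, $Q_Y$, $Q_Z$, and hence $H_Q(S)$, $H_Q(S\mid Z)$, $H_Q(S\mid Y)$, $I_Q(S;Y)$ and $I_Q(S;Z)$—is a constant function of $Q$ as $Q$ ranges over $\Delta_P$.

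For the equivalence claim, I would invoke the chain rule
\[
I_Q(S;Y,Z) \;=\; I_Q(S;Z) + I_Q(S;Y\mid Z) \;=\; I_Q(S;Y) + I_Q(S;Z\mid Y),
\]
together with the observation above. It follows that on $\Delta_P$ the three functionals $I_Q(S;Y,Z)$, $I_Q(S;Y\mid Z)$ and $I_Q(S;Z\mid Y)$ pairwise differ by constants, while $CoI_Q(S;Y;Z)=I(S;Y)-I_Q(S;Y\mid Z)$ differs from $-I_Q(S;Y\mid Z)$ by a constant. Consequently, a distribution $Q^*\in\Delta_P$ minimizes any one of these functionals if and only if it minimizes (or, in the case of $CoI_Q$, maximizes) all of the others, which is precisely the equivalence statement for~\eqref{eq:uidefinition}--\eqref{subeq:cidef} and~\eqref{eq:unioninfo}.

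For convexity, I would use the standard fact that the conditional entropy $H_Q(A\mid B)$ is concave in the joint distribution $Q_{AB}$, a consequence of the joint convexity of $(x,y)\mapsto x\log(x/y)$. Writing
\[
I_Q(S;Y\mid Z) \;=\; H_Q(S\mid Z) - H_Q(S\mid Y,Z),
\]
the first term is constant on $\Delta_P$ while $-H_Q(S\mid Y,Z)$ is convex in $Q$, so $I_Q(S;Y\mid Z)$ is convex on $\Delta_P$. The chain-rule identity then transfers convexity to $I_Q(S;Y,Z)$ and $I_Q(S;Z\mid Y)$, and the $CoI_Q$ maximization becomes a convex minimization by negation. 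The only conceptual subtlety—and there is really only one—is noticing that the marginal constraints defining $\Delta_P$ are exactly what is needed to annihilate the concave pieces of these mutual-information functionals, which on the full simplex are neither convex nor concave; once this is in place, the rest of the proof is essentially a one-line chain-rule computation.
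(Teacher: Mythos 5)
Your proof is correct and follows essentially the same route as the paper's: the equivalence comes from the fact that the objectives differ only by quantities ($I_Q(S;Y)$, $I_Q(S;Z)$, $H_Q(S)$, $H_Q(S\mid Z)$, \dots) that are fixed by the marginal constraints defining $\Delta_P$, and convexity reduces to the concavity of $H_Q(S\mid Y,Z)$ in $Q$ after the constant conditional-entropy term is stripped off. The paper states this more tersely (writing $\min_{Q\in\Delta_P} I_Q(S;Y,Z)=H(S)-\max_{Q\in\Delta_P}H_Q(S\mid Y,Z)$ rather than your $H_Q(S\mid Z)-H_Q(S\mid Y,Z)$ decomposition), but the underlying argument is identical.
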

\begin{proof}
	The equivalence of the optimization problems follows from~\eqref{eq:MI-decomposition}--\eqref{eq:MI-decomposition-2}. Moreover, $$\min_{Q\in\Delta_{P}}I_{Q}(S;Y,Z) = H(S) - \max_{Q\in\Delta_{P}} H_{Q}(S|Y,Z),$$ since $H(S)$ is constant on~$\Delta_{P}$. Convexity of the optimization problems follows from the fact that $H_{Q}(S|Y,Z)$ is concave with respect to~$Q$~\citep{ckbook}.
\end{proof}
The target function $I_Q(S;Y,Z)$ is convex, but not strictly convex; it is continuous in $\Delta_P$, and smooth in the interior of $\Delta_P$ (but not on the boundary). For certain~$P$, the optimal~$Q\in\Delta_{P}$ may not be unique. However, convexity guarantees that any local optimizer is a global optimizer and that the optimum value is unique.

\paragraph{Double minimization formulation.}
The mutual information can be written in the form
$I_P(S;Y,Z) =\mathop{\min}_{R_{YZ} \in \mathbb{P}_{\Ycal\times \Zcal}} D(P \|P_S R_{YZ})$, 
with the minimum attained at 
$R^*_{YZ}=P_{YZ}$~\cite[see, e.g.,][eq.~(8.7)]{ckbook}. 

With this expression, we can rewrite~\eqref{eq:unioninfo} as a double minimization problem: 
\begin{align}
I_\cup(S;Y,Z) &
= \min_{Q \in\Delta_P} \min_{R_{YZ}\in\mathbb{P}_{\Ycal\times\Zcal}} D(Q \|Q_S R_{YZ}). 
\label{eq:variationalrep1}
\end{align}

\paragraph{Conditional probability formulation.}
The minimization problem~\eqref{eq:variationalrep1} can also be studied and solved over a set of conditional probabilities, instead of the set $\Delta_{P}$ that consists of joint probability distributions. 
In fact, $\Delta_{P}$ is in bijection with
$\Delta_{P,S} := \bigtimes_{s\in\Scal}\Delta_{P,s}$,  
where 
\begin{align}
\Delta_{P,s} := \big\{Q_{YZ}\in \mathbb{P}_{\Ycal\times\Zcal}\colon Q_{Y}(y)=P_{Y|S}(y|s) \text{ and } Q_{Z}(z)=P_{Z|S}(z|s)\big\}, \quad s\in\Scal. 
\label{eq:delPs}
\end{align}
The set $\Delta_{P,s}$ is the linear family of probability distributions of $(Y,Z)$ defined by fixing the marginal distributions of $Y$ and $Z$ to be those of $P_{YZ|s}$. 
Any joint distribution $Q\in\Delta_P$ has the form $Q=P_SQ_{YZ|S}$ with $Q_{YZ|S}\in\Delta_{P,S}$. 
In turn, the optimization problem~\eqref{eq:variationalrep1} can be written as 
\begin{align}
I_\cup(S;Y,Z) 
=&\min_{Q_{YZ|S} \in\Delta_{P,S}} \min_{R_{YZ}\in\mathbb{P}_{\Ycal\times\Zcal}} D(P_{S}Q_{YZ|S}\|P_{S}R_{YZ}) \notag\\
=&\min_{R_{YZ}\in\mathbb{P}_{\Ycal\times\Zcal}} \sum_s P_S(s) \min_{Q_{YZ|s} \in\Delta_{P,s}} D(Q_{YZ|s}\|R_{YZ}). 
\label{eq:variationalrep2} 
\end{align}

\paragraph{Alternating divergence minimization.}
With the formulation obtained above, we are able to derive an alternating optimization algorithm and leverage classic results to prove convergence and optimality. An alternating algorithm iteratively fixes one of the two free variables and optimizes over the other. Starting with some $R^{(0)}_{YZ}\in \mathbb{P}_{\Ycal\times\Zcal}$, recursively define
\begin{subequations}
\label{subeq:alternating}
\begin{align}
Q^{(i+1)}_{YZ|s}&=\argmin_{Q_{YZ|s}\in\Delta_{P,s}} D(Q_{YZ|s}\|R^{(i)}_{YZ}) \quad\text{for each $s\in\Scal$},\label{subeq:alternating1}\\
 R^{(i+1)}_{YZ}&=\argmin_{R_{YZ}\in\mathbb{P}_{\Ycal\times\Zcal}} D(P_{S}Q^{(i+1)}_{YZ|S}\|P_{S}R_{YZ}). \label{subeq:alternating2}
\end{align}
\end{subequations}

With suitable initialization, this iteration converges to a pair attaining the global optimum:

\begin{theorem}\label{Thc}
  Given $P\in\mathbb{P}_{\Scal\times\Ycal\times\Zcal}$ and an initial value $R^{(0)}_{YZ}\in
  \mathbb{P}_{\Ycal\times\Zcal}$ of full support, 
  the iteration~\eqref{subeq:alternating} converges. 
  Moreover, the limit $\lim_{i\to\infty}P_{S}Q^{(i)}_{YZ|S}$ is a global optimum of the minimization
  problem~\eqref{eq:variationalrep2}. 
\end{theorem}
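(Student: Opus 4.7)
My plan is to recognize the iteration (\ref{subeq:alternating}) as an instance of the Csiszár--Tusnády alternating divergence minimization scheme for the jointly convex objective
\begin{equation*}
F(Q_{YZ|S},R_{YZ}) \;:=\; D\bigl(P_{S}Q_{YZ|S}\,\big\|\,P_{S}R_{YZ}\bigr),
\end{equation*}
where the two variables range over the convex compact sets $\Delta_{P,S}=\bigtimes_s\Delta_{P,s}$ and $\mathbb{P}_{\Ycal\times\Zcal}$. Step (\ref{subeq:alternating1}) decouples across $s$ into $|\Scal|$ independent I-projections of the fixed distribution $R^{(i)}_{YZ}$ onto the linear (transportation) families $\Delta_{P,s}$, and step (\ref{subeq:alternating2}) admits the explicit minimizer $R^{(i+1)}_{YZ}=\sum_s P_S(s)Q^{(i+1)}_{YZ|s}$, i.e.\ the $(Y,Z)$-marginal of $P_SQ^{(i+1)}_{YZ|S}$.

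First I would verify that the recursion is well-defined. Since $R^{(0)}_{YZ}$ has full support and each $\Delta_{P,s}$ is a non-empty linear family (it contains $P_{YZ|S=s}$), Csiszár's classical I-projection theorem gives a unique $Q^{(1)}_{YZ|s}$ solving (\ref{subeq:alternating1}), with support contained in $\supp R^{(0)}_{YZ}=\Ycal\times\Zcal$ intersected with the smallest face of the simplex containing $\Delta_{P,s}$. Taking the mixture $R^{(1)}_{YZ}=\sum_s P_S(s)Q^{(1)}_{YZ|s}$ produces a distribution whose support is the union of these supports, which by construction is large enough for every subsequent I-projection to again be well-defined; full-support preservation of the mixture follows under the standard assumption that all $P_{Y|s}$, $P_{Z|s}$ have full support, and otherwise the argument is confined to the appropriate face of the simplex.

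Second I would establish monotonicity. By construction $F(Q^{(i+1)},R^{(i)})\le F(Q^{(i)},R^{(i)})$ and $F(Q^{(i+1)},R^{(i+1)})\le F(Q^{(i+1)},R^{(i)})$, so the sequence $F(Q^{(i)},R^{(i)})$ is non-increasing and bounded below by $0$, hence convergent. Third, and this is the core step, I would establish the Pythagorean identities that drive Csiszár--Tusnády. For the I-projection onto a linear family one has the exact equality
\begin{equation*}
D(Q\|R^{(i)}_{YZ}) \;=\; D\bigl(Q\,\big\|\,Q^{(i+1)}_{YZ|s}\bigr) + D\bigl(Q^{(i+1)}_{YZ|s}\,\big\|\,R^{(i)}_{YZ}\bigr)
\end{equation*}
for every $Q\in\Delta_{P,s}$, while the explicit form of the $R$-update gives $F(Q,R^{(i+1)}_{YZ}) = F(Q,R_{YZ}) - D(R^{(i+1)}_{YZ}\|R_{YZ})$ for any $R_{YZ}$. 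Summing against $P_S$ and combining yields the "three-point" and "four-point" properties in the Csiszár--Tusnády framework, from which one deduces
\begin{equation*}
F(Q^*,R^{(i)}) - F(Q^{(i+1)},R^{(i+1)}) \;\ge\; D(Q^*_{YZ}\|R^{(i+1)}_{YZ}) - D(Q^*_{YZ}\|R^{(i)}_{YZ})
\end{equation*}
for any optimizer $Q^*\in\Delta_{P,S}$. Telescoping this inequality forces $F(Q^{(i)},R^{(i)})\downarrow\min F$, which by the second expression in (\ref{eq:variationalrep2}) equals $I_\cup(S;Y,Z)$.

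Finally, I would extract convergence of the iterates themselves. Compactness of $\Delta_{P,S}\times\mathbb{P}_{\Ycal\times\Zcal}$ yields a convergent subsequence; continuity of the two arg-min maps (given the support control above) shows the limit is a fixed point of the alternating update, and any such fixed point satisfies the KKT conditions of the convex problem (\ref{eq:unioninfo}) and is therefore a global optimum. Uniqueness of the optimal value (Proposition~\ref{prop:eqvoptm}) then promotes subsequential convergence to convergence of the full sequence $P_SQ^{(i)}_{YZ|S}$. The main obstacle I anticipate is not the convexity or monotonicity arguments, which are standard, but the bookkeeping around support: ensuring that the I-projections in (\ref{subeq:alternating1}) remain well-posed throughout the iteration, and that the Pythagorean equality applies in the form needed to invoke Csiszár--Tusnády, rather than merely an inequality that could degrade on the boundary of $\Delta_P$.
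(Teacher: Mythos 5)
Your plan is, in substance, the same argument the paper relies on: the paper's entire proof consists of observing that $\Delta_P$ and $\{P_SR_{YZ}\colon R_{YZ}\in\mathbb{P}_{\Ycal\times\Zcal}\}$ are compact and convex subsets of $\mathbb{P}_{\Scal\times\Ycal\times\Zcal}$ and then citing the Csisz\'ar--Shields alternating divergence minimization theorem (their Corollary~5.1), whose proof is exactly the three-point/four-point/five-point machinery you reconstruct. So you are re-deriving the cited black box rather than invoking it; what this buys is an explicit view of where the full-support hypothesis on $R^{(0)}_{YZ}$ enters (your support bookkeeping is precisely the hypothesis that makes the citation applicable, since $P_{YZ|s}\in\Delta_{P,s}$ guarantees finiteness of the relevant divergences at every stage), at the cost of having to get the inequalities exactly right.

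Two points in your write-up need repair. First, the telescoping inequality as you state it, $F(Q^*,R^{(i)})-F(Q^{(i+1)},R^{(i+1)})\ge D(Q^*_{YZ}\|R^{(i+1)}_{YZ})-D(Q^*_{YZ}\|R^{(i)}_{YZ})$, does not force $F(Q^{(i)},R^{(i)})\downarrow\min F$: after summing, the left-hand terms are only bounded above by $F(Q^{(i+1)},R^{(i+1)})-\min F$ and may be negative, so boundedness of the sum yields nothing. The usable five-point inequality runs the other way, namely $F(Q^{(i+1)},R^{(i+1)})-\min F\le D(Q^*\|R^{(i)})-D(Q^*\|R^{(i+1)})$, whose right side telescopes while the left side is nonnegative and monotone. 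Second, your final step is a non sequitur: uniqueness of the \emph{optimal value} does not upgrade subsequential convergence to convergence of the full sequence, because the optimizer $Q\in\Delta_P$ need not be unique (the paper says so explicitly after Proposition~\ref{prop:eqvoptm}). The standard fix is Fej\'er-type monotonicity: apply the corrected five-point inequality with $Q^*$ replaced by a subsequential limit $Q^\dagger$ to conclude that $D(Q^\dagger\|R^{(i)})$ is eventually nonincreasing, hence convergent, hence convergent to $0$ along the subsequence and therefore along the whole sequence. With those two adjustments your argument closes, and it matches what Corollary~5.1 of Csisz\'ar--Shields delivers in one line.
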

\begin{proof}
For any $P$, the subsets $\Delta_P$ and $\{ P_SR_{YZ}\colon R_{YZ}\in \mathbb{P}_{\Ycal\times\Zcal} \}$ of $\mathbb{P}_{\Scal\times\Ycal\times\Zcal}$ are compact and convex. 
The statement then follows from~\cite[][Corollary 5.1]{csiszar2004}. 
\end{proof}

\paragraph{Implementation.}
Pseudocode for the alternating divergence minimization algorithm for computing the union information (\texttt{admUI}) is in Algorithm~\ref{alg:admUI}. 
We next discuss the two steps separately. 
\paragraph{\underline{Step 1}}
The optimization problems~\eqref{subeq:alternating1} are relatively standard and can be solved, e.g., using generalized iterative scaling (GIS) (pseudocode in Algorithm~\ref{alg:Iproj}). 
\begin{theorem}\label{theorem:innerloop}
The nonnegative functions $b_n$ on $\Ycal\times\Zcal$ defined recursively by 
\begin{equation}
b_0(y,z) = R_{YZ}(y,z), \quad b_{n+1}(y,z) = b_n(y,z) \bigg[\frac{P_{Y|S}(y|s)}{\sum_z b_n(y,z)}\bigg]^{1/2} \bigg[\frac{P_{Z|S}(z|s)}{\sum_y b_n(y,z)}\bigg]^{1/2}, 
\label{eq:Iprojiteration}
\end{equation}
converge to $\argmin_{Q_{YZ|s}\in\Delta_{P,s}}D(Q_{YZ|s}\|R_{YZ})$, that is, the $I$-projection of $R_{YZ}$ to $\Delta_{P,s}$. 
\end{theorem}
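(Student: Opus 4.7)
The plan is to identify the iteration~(\ref{eq:Iprojiteration}) with the Darroch--Ratcliff generalized iterative scaling (GIS) update for the linear family $\Delta_{P,s}$, after which convergence to the $I$-projection is a classical result.

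First I would write $\Delta_{P,s}$ explicitly as the linear family cut out by the affine constraints
\begin{equation*}
\E_{Q}[f^{Y}_{y}] = P_{Y|S}(y|s) \quad (y\in\Ycal), \qquad \E_{Q}[f^{Z}_{z}] = P_{Z|S}(z|s) \quad (z\in\Zcal),
\end{equation*}
where $f^{Y}_{y}(y',z') = \mathbf{1}[y'=y]$ and $f^{Z}_{z}(y',z') = \mathbf{1}[z'=z]$ are the row and column indicator features. The family is nonempty, since $P_{YZ|s}\in\Delta_{P,s}$, so the $I$-projection of $R_{YZ}$ exists and, by strict convexity of $D(\cdot\|R_{YZ})$ on the simplex when $R_{YZ}$ has full support, it is unique.

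Next I would note that at every $(y,z)$ the sum of active features equals $\sum_{y'} f^{Y}_{y'}(y,z) + \sum_{z'} f^{Z}_{z'}(y,z) = 1 + 1 = 2$. Rescaling each feature by $1/2$ puts the constraints in the standard GIS form (features summing pointwise to $1$), with rescaled targets $P_{Y|S}(y|s)/2$ and $P_{Z|S}(z|s)/2$. The Darroch--Ratcliff multiplicative update for these features is
\begin{equation*}
b_{n+1}(y,z) = b_{n}(y,z) \prod_{y'}\left(\frac{P_{Y|S}(y'|s)}{\sum_{z} b_{n}(y',z)}\right)^{\mathbf{1}[y'=y]/2} \prod_{z'}\left(\frac{P_{Z|S}(z'|s)}{\sum_{y} b_{n}(y,z')}\right)^{\mathbf{1}[z'=z]/2},
\end{equation*}
which collapses exactly to~(\ref{eq:Iprojiteration}). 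Equivalently, (\ref{eq:Iprojiteration}) is the symmetric ``simultaneous'' variant of iterative proportional fitting: one adjusts $b_{n}$ toward the correct $Y$- and $Z$-marginals by the geometric mean of the two marginal rescalings.

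Finally I would invoke the classical convergence theorem for GIS on a nonempty linear family in the $I$-projection framework of~\cite{csiszar2004}: for strictly positive initialization $b_{0}=R_{YZ}$, the iterates stay positive, $D(Q^{\star}\|b_{n})$ is nonincreasing for every $Q^{\star}\in\Delta_{P,s}$, and $b_{n}$ converges to the unique $I$-projection $\argmin_{Q_{YZ|s}\in\Delta_{P,s}} D(Q_{YZ|s}\|R_{YZ})$. The main obstacle is the standard bookkeeping around support and normalization: one must verify that the denominators $\sum_{z} b_{n}(y,z)$ and $\sum_{y} b_{n}(y,z)$ never vanish where the marginal targets are positive, and that the fixed points of~(\ref{eq:Iprojiteration}) are precisely the elements of $\Delta_{P,s}$ that differ from $R_{YZ}$ by a product of a row and a column factor (so that they satisfy the Kuhn--Tucker conditions of the $I$-projection). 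Full support of $R_{YZ}$ and positivity of $P_{Y|S}(\cdot|s)$, $P_{Z|S}(\cdot|s)$ on their supports handle the first point; the second is the characterization of the $I$-projection onto a linear family as the unique intersection with the exponential family generated by the features, after which identification with GIS delivers convergence and optimality.
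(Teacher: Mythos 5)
Your proposal is correct and follows essentially the same route as the paper: the paper's proof simply cites the classical convergence theorem for generalized iterative scaling in \cite{csiszar2004} (Theorem~5.2), which is exactly the result you invoke after verifying that the iteration is the GIS update for the row/column indicator features (summing pointwise to $2$, hence the $1/2$ exponents). Your additional bookkeeping on nonemptiness of $\Delta_{P,s}$, support, and the exponential-family characterization of the fixed points is a sound unpacking of the hypotheses of that theorem rather than a different argument.
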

\begin{proof}
The claim follows from~\cite[][Theorem~5.2]{csiszar2004}. 
\end{proof}
\begin{algorithm}[H]
	\begin{algorithmic}[1]
		\State{\textbf{Input}: Marginals $P_{SY}$ and $P_{SZ}$}
		\State{\textbf{Output}: $Q^\ast = \argmin_{Q\in\Delta_P} I_Q(S;Y,Z)$}
		\State{\textbf{Initialization}: Some $R^{(0)}_{YZ}$ from the interior of $\mathbb{P}_{\Ycal\times\Zcal}$. Set $i=0$.}
		\While{not converged}
		\ForAllP{$s \in \supp(P_S)$}  
		\State{$Q^{(i+1)}_{YZ|s}\gets \argmin_{Q_{YZ|s}\in\Delta_{P,s}}D(Q_{YZ|s}\|R^{(i)}_{YZ|s})$\quad (Algorithm 2)} \Comment{Step 1}
		\EndFaP
		\State{$R^{(i+1)}_{YZ}(y,z)\gets \sum_{s\in \Scal}P_S(s)Q^{(i+1)}_{YZ|S}(y,z|s)$} \Comment{Step 2}
		\State{$i\gets i+1$}
		\EndWhile
		\State \textbf{return} $P_S Q^{(i)}_{YZ|S}$
	\end{algorithmic}
	\caption{\textbf{ 
			Alternating divergence minimization for the union information (\texttt{admUI})}}
	\label{alg:admUI}
\end{algorithm}
\vspace{-6mm}
\begin{algorithm}[H]
	\begin{algorithmic}[1]
		\State{\textbf{Input}: Marginals $P_{SY}$ and $P_{SZ}$, some $s\in \Scal$, and target distribution $R_{YZ}$}
		\State{\textbf{Output}: $Q^\ast_{YZ|s} = \argmin_{Q_{YZ|s}\in\Delta_{P,s}}D(Q_{YZ|s}\|R_{YZ})$}
		\State{\textbf{Initialization}: 
			$b_0(y,z) \gets R_{YZ}(y,z)$. Set $n=0$.}
		\While{not converged}
		\State $b_{n+1}(y,z) \gets b_n(y,z)\left[\frac{P_{Y|S}(y|s)}{\sum_z b_n(y,z)}\right]^{1/2}\left[\frac{P_{Z|S}(z|s)}{\sum_y b_n(y,z)}\right]^{1/2}$
		\State{$n\gets n+1$}
		\EndWhile\label{euclidendwhile}
		\State \textbf{return} $b_{n}$
	\end{algorithmic}
	\caption{\textbf{
			The $I$-projection of $R_{YZ}$ to $\Delta_{P,s}$}}
	\label{alg:Iproj}
\end{algorithm}

\paragraph{\underline{Step 2}}
Using the variational representation for $I_{P_SQ_{YZ|S}}(S;Y,Z)$ discussed prior~\eqref{eq:variationalrep1}, we can write the minimizer of~\eqref{subeq:alternating2} in closed form as 
\begin{equation}
 R^{(i+1)}_{YZ}(y,z) = \sum_{s\in \Scal}P_S(s)Q^{(i+1)}_{YZ|S}(y,z|s). 
 \label{eq:iter1}
\end{equation}

\paragraph{\underline{Stopping criterion}}
The iteration~\eqref{subeq:alternating} can be stopped when  
\begin{equation}
\max_{y\in\Ycal,z\in\Zcal}\log{\tfrac{Q^{(i+1)}_{YZ|S}(y,z|s)}{Q^{(i)}_{YZ|S}(y,z|s)}}\le \epsilon,\quad\text{for all }s\in\Scal, 
\label{eq:stopping}
\end{equation}
for some prescribed accuracy~$\epsilon>0$. When this condition is satisfied, $\epsilon$ is an upper bound on the difference of the current value of the divergence and the minimum~\citep[see][Corollary~5.1]{csiszar2004}. $\epsilon$ is a parameter of the algorithm.  In our experiments, we chose $\epsilon =10^{-6}$.

For the $I$-projection, the iteration~\eqref{eq:Iprojiteration} can be stopped when the squared distance between subsequent distributions is less than the square of some prespecified $\epsilon_1$. 
We found $\epsilon_1 = 10^{-2} \epsilon$ to be a good standard value. We call this the heuristic stopping criteria (Stop~1).

In general, the distributions returned in Step 1 are not exact, and this needs to be accounted for in the stopping criterion. 
In Appendix~\ref{sec:stopping}, we show that it is possible to guarantee $\epsilon$-optimality of the overall optimization, if the outer loop is interrupted when 
$\max_{y,z,s} \log \tilde Q^{(i+1)}_{YZ|S}(y,z|s) / \tilde Q^{(i)}_{YZ|S}(y,z|s) \leq \frac{\epsilon}{3}$, 
and iteration~\eqref{eq:Iprojiteration} is interrupted when $\|\tilde \eta^{(i)} - \eta\|_1 \leq \tilde Q^{(i)}_{YZ|S}(y,z|s) \frac{\epsilon}{12}$. Here $\tilde \eta^{(i)}$ and $\eta$ denote the expectation parameters of the current iterate and of the target, respectively. We call this the rigorous stopping criteria (Stop~2) .

\paragraph{Time complexity.}
In Appendix~\ref{sec:time_complexity}, we show that the overall time complexity of one iteration of Algorithm~\ref{alg:admUI} is dominated by Step~1. The time complexity of finding the $I$-projection (Algorithm~\ref{alg:Iproj}) depends on the distribution $b_0$ that is being projected. For a uniform distribution $b_0$, we show that the complexity of finding the $I$-projection to within~$\epsilon_1$ of the true solution is~$\mathcal{O}(\tfrac{|\Ycal||\Zcal|\log{\left(|\Ycal||\Zcal|\right)}}{\epsilon_1})$.

\paragraph{Modifications.} 
There are various natural modifications of our algorithms that can contribute to an improved performance. The stopping criterion does not need to be evaluated in every iteration. Evaluating it once every 20 iterations saved about 10\% of the total computation time, as we found in numerical experiments. For large systems, Step~1 can be run in parallel for blocks of $s$ values. The stopping criterion discussed previously will work regardless of the iterative optimization method used in Step~1. 

Step~1 is computationally demanding. So we expect that optimizing the computation of the $I$-projection will lead to further performance gains. 
In Appendix~\ref{sec:proximal}, we discuss a proximal point formulation of the GIS~\eqref{eq:Iprojiteration} that leads to the following iteration:
\begin{align}
b_{n+1}(y,z) = b_n(y,z) \bigg[\frac{P_{Y|S}(y|s)}{\sum_z b_n(y,z)}\bigg]^{1/(2\gamma)} \bigg[\frac{P_{Z|S}(z|s)}{\sum_y b_n(y,z)}\bigg]^{1/(2\gamma)}, 
\label{eq:Iprojiteration_accsoln}
\end{align}
for some~$\gamma\in \left(0, 1\right]$. We get back the GIS iteration~\eqref{eq:Iprojiteration} for~$~\gamma=1$.
Choosing$~\gamma<1$ has the potential for accelerating convergence. 
In Section~\ref{sec:experiments}, we report some preliminary findings using this approach.
A detailed analysis of the convergence properties of~\eqref{eq:Iprojiteration_accsoln} is reserved for future study.

\section{Experiments}
\label{sec:experiments}
\paragraph{Comparison with other methods.}
We compare the performance of our alternating divergence minimization algorithm admUI against other optimization methods. 
We implemented the \texttt{admUI} algorithm in Matlab R2017a as a Matlab executable (MEX). The Matlab source code as well as a standalone Python implementation are available online\footnote{\url{https://github.com/infodeco/computeUI}}.

Our first baseline is the general purpose optimizer \texttt{fmincon} (selecting the interior-point method) from the Matlab optimization package, with options for including the gradient and the Hessian.
The left panel in Figure~\ref{fig:1} shows the mean of the values of the unique information computed on $250$ joint distributions of $(S,Y,Z)$ sampled uniformly at random from the probability simplex. The right panel shows the average computation (wall-clock) time. We are interested in the accuracy of the computations and the required computation time as the state spaces increase in size. In terms of accuracy, all methods perform similarly (lower values reflect more accurate outcomes of the minimization). 
However, our algorithm allows for significant savings in terms of computation time. 
In fact, the black-box \texttt{fmincon} and \texttt{fmincon} with only the gradient included failed to give any answer beyond $|\Scal|=12$ in a reasonable amount of time (see last row in Figure~\ref{fig:1}). 

We also compared a Python implementation of the \texttt{admUI} algorithm with other software that solve~\eqref{eq:MI-decomposition}, viz. (a) an implementation using the conditional gradient method (also called the Frank-Wolfe algorithm) included in the Python package \texttt{dit}\footnote{R.~G. James, C.~J. Ellison, and J.~P. Crutchfield. \url{https://github.com/dit/dit}} (see also~\cite{pica2017quantifying}), and (b) an implementation in~\cite{DOT2017bivariate}\footnote{\url{https://github.com/Abzinger/BROJA-Bivariate-Partial_Information_Decomposition/blob/master/Python/cvxopt_solve.py}} using the Python interior-point solver \texttt{CVXOPT}\footnote{\url{cvxopt.org}}. 
Figure~\ref{fig:2} compares the mean of the values of the unique information computed on $100$ joint distributions of $(S,Y,Z)$ sampled uniformly from the simplex. 
The \texttt{dit} solver~(a) failed to converge a total of 14 times (out of 100) and took inordinately long to give any answer beyond $|\Scal|=5$. 
The performance of the \texttt{CVXOPT} implementation~(b) was comparable.

We note that for \texttt{admUI} we did not parallelize the computations in Step~1, which we expect will provide additional savings, especially for systems with large $\mathcal{S}$ (last row in the Figure~\ref{fig:1}). 
\begin{figure}
	\centering
	\subfloat{	
		\begin{tikzpicture}
		\begin{axis}[
		height=5cm,
		width=6cm,
		title={\textbf{$|\mathcal{S}|=|\mathcal{Z}|=2$}},
		xlabel=$|\mathcal{Y}|$,
		ylabel=$UI(S;Y\backslash Z)$,
		yticklabel style={
			/pgf/number format/fixed,
			/pgf/number format/precision=5},
		xmin=2,
		xmax=14,
		xtick={2,4,6,8,10,12,14},
		ymin=0
		]
		\addplot coordinates {
			(2,0.036337190330491)
			(3,0.069894355469846)
			(4,0.093929965681318)
			(5,0.100035935028438)
			(6,0.107657258125134)
			(7,0.108432820381229)
			(8,0.119815317628210)
			(9,0.124762783126674)
			(10,0.124475347984120)
			(12,0.124192837421809)
			(14,0.122183740651046)
		};\label{p1}
		
		\addplot coordinates {
			(2,0.036338033029132)
			(3,0.069894747628011)
			(4,0.093930455258234)
			(5,0.100036265396360)
			(6,0.107657606473771)
			(7,0.108433078432066)
			(8,0.119815448660499)
			(9,0.124762998685682)
			(10,0.124475525562697)
			(12,0.124230083871032)
			(14,0.122183833315786)
		};\label{p2}
		
		\addplot[green,mark=triangle*] coordinates {
			(2,	0.036337650118867)
			(3,	0.069894770437329)
			(4,	0.093930401251079)
			(5,	0.100036303527388)
			(6,	0.107657781266433)
			(7,	0.108433120038978)
			(8,	0.119815573022502)
			(9,	0.124763137657268)
			(10,0.124475629730962)
			(12,0.124260901254419)
			(14,0.122184203462197)
		};\label{p3}
		
		\addplot coordinates {
			(2, 0.036339393828858)
			(3, 0.069895377550413)
			(4, 0.093930913325256)
			(5, 0.100036657743370)
			(6, 0.107658432678804)
			(7, 0.108433536240814)
			(8, 0.119815620117481)
			(9, 0.122806497585793)
			(10,0.124625041842130)
			(12,0.126092490609513)
			(14,0.125091225869430)
		}; 
		\end{axis}
		\end{tikzpicture}}\qquad
	\subfloat{	
		\begin{tikzpicture}
		\begin{axis}[
		height=5cm,
		width=6cm,
		title={\textbf{$|\mathcal{S}|=|\mathcal{Z}|=2$}},
		xlabel=$|\mathcal{Y}|$,
		ylabel=time (s),
		yticklabel style={
			/pgf/number format/fixed,
			/pgf/number format/precision=5},
		xmin=2,
		xmax=14,
		xtick={2,4,6,8,10,12,14},
		ymin=0		
		]
		
		\addplot coordinates {
			(2, 0.010184840000000)
			(3, 0.018188912000000)
			(4,	0.020635012000000)
			(5,	0.025902420000000)
			(6,	0.055179072000000)
			(7,	0.033719548000000)
			(8,	0.019810860000000)
			(9,	0.060236128000000)
			(10,0.040368788000000)
			(12,0.048483542000000)
			(14,0.051733520000000)
		}; 
		
		\addplot coordinates {
			(2,	0.116160316000000)
			(3,	0.126403928000000)
			(4,	0.160154920000000)
			(5,	0.138765764000000)
			(6,	0.153746016000000)
			(7,	0.187359772000000)
			(8,	0.198938864000000)
			(9,	0.219070324000000)
			(10,0.222038420000000)
			(12,0.356783140000000)
			(14,0.388234590000000)
		}; 
		
		\addplot[green,mark=triangle*] coordinates {
			(2,	0.062509712000000)
			(3,	0.080577908000000)
			(4,	0.110016308000000)
			(5,	0.101113352000000)
			(6,	0.110677444000000)
			(7,	0.140571768000000)
			(8,	0.158299048000000)
			(9,	0.183299872000000)
			(10,0.201612412000000)
			(12,0.324351940000000)
			(14,0.380367040000000)	
		}; 

		\addplot coordinates {
			(2,	0.052614304000000)
			(3,	0.070538556000000)
			(4,	0.107255816000000)
			(5,	0.106469336000000)
			(6,	0.125625668000000)
			(7,	0.155867544000000)
			(8,	0.176508592000000)
			(9,	0.212513052000000)
			(10,0.245971740000000)
			(12,0.412341230000000)
			(14,0.548932090000000)	
		}; \label{p4}
		
		\end{axis}
		\end{tikzpicture}}
		
	\subfloat{
		\begin{tikzpicture}
		\begin{axis}[
		height=5cm,
		width=6cm,
		title={\textbf{$|\mathcal{S}|=|\mathcal{Y}|=2$}},
		xlabel=$|\mathcal{Z}|$,
		ylabel=$UI(S;Y\backslash Z)$,
		yticklabel style={
			/pgf/number format/fixed,
			/pgf/number format/precision=5},
		xmin=2,
		xmax=14,
		xtick={2,4,6,8,10,12,14},
		ymin=0,
		]
		\addplot coordinates {
			(2,0.045736439975042)   
			(3,0.018227198483102)   
			(4,0.011925900195795)   
			(5,0.006909394928942)   
			(6,0.004992134149047)   
			(7,0.002500562369503)   
			(8,0.001704971553163)   
			(9,0.000897618615722)   
			(10,0.001501752588882)
			(12,0.001213348602059) 
			(14,0.000417201504115)  
		}; 
		
		\addplot coordinates {
			(2,0.045737278128433)    
			(3,0.018233778872387)   
			(4,0.011926319031251)   
			(5,0.006989444670330)   
			(6,0.004992344767831)   
			(7,0.002500734144857)   
			(8,0.001705145506218)   
			(9,0.000897735615845)   
			(10,0.001502019817655) 
			(12,0.001213467965821)  
			(14,0.000417505596073)
		}; 
		
		\addplot[green,mark=triangle*] coordinates {
			(2,0.045736788964073)   
			(3,0.018227617289008)   
			(4,0.011926407562748)   
			(5,0.006909796909507)   
			(6,0.004992442765701)   
			(7,0.002500787471947)   
			(8,0.001705210202248)   
			(9,0.000897735567439)   
			(10,0.001502055890341)   
			(12,0.001213479642736)
			(14,0.000417505631941)
		}; 
		
		\addplot coordinates {
			(2,0.045737571973642)
			(3,0.018228024314051)
			(4,0.012003835016522)
			(5,0.006910061337917)
			(6,0.005023335778767)
			(7,0.002561441069351)
			(8,0.000541619186985)
			(9,0.001147150612226)
			(10,0.001576564546069)
			(12,0.002213709556396)
			(14,0.001975359919793)
		}; 
		\end{axis}
		\end{tikzpicture}}\qquad
	\subfloat{
		\begin{tikzpicture}
		\begin{axis}[
		height=5cm,
		width=6cm,
		title={\textbf{$|\mathcal{S}|=|\mathcal{Y}|=2$}},
		xlabel=$|\mathcal{Z}|$,
		ylabel=time (s),
		yticklabel style={
			/pgf/number format/fixed,
			/pgf/number format/precision=5},
		xmin=2,
		xmax=14,
		xtick={2,4,6,8,10,12,14},
		ymin=0,
		]
		\addplot coordinates {
			(2,0.018227548000000)   
			(3,0.018663604000000)   
			(4,0.031289084000000)   
			(5,0.049334788000000)   
			(6,0.025016444000000)   
			(7,0.018376484000000)   
			(8,0.065971608000000)   
			(9,0.020697672000000)   
			(10,0.041753772000000)  
			(12,0.042888100000000) 
			(14,0.058427660000000)
		};
		
		\addplot coordinates {
			(2,0.219638892000000)   
			(3,0.257696412000000)   
			(4,0.276246560000000)   
			(5,0.299752192000000)   
			(6,0.275850900000000)   
			(7,0.234101684000000)   
			(8,0.246174020000000)   
			(9,0.230280280000000)   
			(10,0.261678584000000)
			(12,0.419016620000000)  
			(14,0.394722020000000) 
		};
		
		\addplot[green,mark=triangle*] coordinates {
			(2,0.119682408000000)   
			(3,0.156066524000000)   
			(4,0.191287848000000)   
			(5,0.224432684000000)   
			(6,0.216187752000000)   
			(7,0.218527492000000)   
			(8,0.202126160000000)   
			(9,0.181216148000000)   
			(10,0.208251072000000)
			(12,0.356610620000000) 
			(14,0.422046400000000)
		};
		
		\addplot coordinates {
			(2,0.099716200000000)
			(3,0.144388624000000)
			(4,0.193335184000000)
			(5,0.254793576000000)
			(6,0.258407872000000)
			(7,0.244054416000000)
			(8,0.270803292000000)
			(9,0.260035188000000)
			(10,0.296725988000000)
			(12,0.4861233980000000)
			(14,0.5512800230000000)
		};
		\end{axis}
		\end{tikzpicture}}
	
	\subfloat{
		\begin{tikzpicture}
		\begin{axis}[
		height=5cm,
		width=6cm,
		title={\textbf{$|\mathcal{S}|=|\mathcal{Y}|=|\mathcal{Z}|$}},
		xlabel=$|\mathcal{S}|$,
		ylabel=$UI(S;Y\backslash Z)$,
		yticklabel style={
			/pgf/number format/fixed,
			/pgf/number format/precision=5},
		xmin=2,
		xmax=14,
		xtick={2,4,6,8,10,12,14},
		ymin=0,
		ymax=0.08
		]
		\addplot coordinates {
			(2,0.043512680387278)         
			(3,0.063660892006202)         
			(4,0.064301904820781)         
			(5,0.054014750668193)         
			(6,0.042740629455059)         
			(7,0.040607202116490)         
			(8,0.038388283890848)         
			(9,0.044589440874052)         
			(10,0.041653051338811)
			(12,0.036775540415215)
			(14,0.034750646409598)
		}; 
		
		\addplot coordinates {
			(2,0.043513514598897)
			(3,0.064661914222407) 
			(4,0.064304713906200) 
			(5,0.054023670966079) 
			(6,0.042762720228318) 
			(7,0.040639748898869) 
			(8,0.038440644159179) 
			(9,0.044668708136723) 
			(10,0.041766449996342) 
			(12,0.037329232191223)
			(14,0.035095490047353)
		}; 
		
		\addplot[green,mark=triangle*] coordinates {
			(2,0.043513052256636) 
			(3,0.063662249790459) 
			(4,0.064320893981959) 
			(5,0.054028836959206) 
			(6,0.042758365630448) 
			(7,0.040633360329410) 
			(8,0.038430776347114) 
			(9,0.044766692119743) 
			(10,0.042184649397181)  
			(12,nan)
			(14,nan)
		}; 
		
		\addplot coordinates {
			(2,0.043514471933175) 
			(3,0.063663431992091) 
			(4,0.064310864150548) 
			(5,0.054032406908049) 
			(6,0.042762361750877) 
			(7,0.040636071084364) 
			(8,0.038437724006580) 
			(9,0.044665761267988) 
			(10,0.041768220372454)
			(12,nan)
			(14,nan)
		}; 
		\end{axis}
		\end{tikzpicture}}\qquad
	\subfloat{
		\begin{tikzpicture}
		\begin{semilogyaxis}[
		log ticks with fixed point,
		height=5cm,
		width=6cm,
		title={\textbf{$|\mathcal{S}|=|\mathcal{Y}|=|\mathcal{Z}|$}},
		xlabel=$|\mathcal{S}|$,
		ylabel=time (s),
		yticklabel style={
			/pgf/number format/fixed,
			/pgf/number format/precision=5},
		xmin=2,
		xmax=14,
		xtick={2,4,6,8,10,12,14},
		]
		\addplot coordinates {
			(2,0.005935260000000)          
			(3,0.112118996000000)          
			(4,0.913945676000000)         
			(5,2.810476188000000)         
			(6,0.006247523500000*1000)         
			(7,0.007759633250000*1000)         
			(8,0.014050927750000*1000)         
			(9,0.017305351250000*1000)         
			(10,0.032366395750000*1000)
			(12,55.3027970000000) 
			(14,70.6605410000000)
		};
		
		\addplot coordinates {
			(2,0.114889824000000) 
			(3,0.259117848000000)  
			(4,0.613103511999999)  
			(5,1.203194124000001)  
			(6,0.002973777500000*1000) 
			(7,0.006810948500000*1000) 
			(8,0.012776858000000*1000) 
			(9,0.022315438750000*1000) 
			(10,0.053615616500000*1000)
			(12,134.0371640000000)
			(14,360.2420370000000)
		};
		
		\addplot[green,mark=triangle*] coordinates {
			(2,0.057118648000000)  
			(3,0.179872432000000)  
			(4,0.611441592000000)  
			(5,2.752290239999999)  
			(6,0.012235002500000*1000) 
			(7,0.042256073250000*1000) 
			(8,0.121243308000000*1000) 
			(9,0.238880543250000*1000) 
			(10,1.909168042750000*1000)
			(12,nan)
			(14,nan)
		};
		
		\addplot coordinates {
			(2,0.055964796000000)  
			(3,0.276478288000000) 
			(4,1.288338032000000) 
			(5,5.587155328000001) 
			(6,0.022046305000000*1000) 
			(7,0.075182621500000*1000) 
			(8,0.207125876750000*1000) 
			(9,0.620672913500000*1000) 
			(10,1.461862588250000*1000)
			(12,nan)
			(14,nan)
		};
		\end{semilogyaxis}
		\end{tikzpicture}}
	\caption{Comparison of a Matlab implementation of our \texttt{admUI} algorithm (\ref{p1}) with \texttt{fmincon} from the Matlab Optimization Toolbox (algorithm: interior-point) when including the gradient and Hessian (\ref{p2}), only the gradient (\ref{p3}), and when including none (\ref{p4}). The left panel shows the average values of the computed unique information for 250 distributions sampled uniformly at random from the probability simplex. The right panel shows the average computation (wall-clock) time on an Intel 2.60~GHz CPU. Note the semilog ordinate in the right panel of the last row which corresponds to much larger systems. 
	}
	\label{fig:1}
\end{figure}
\begin{figure}[tb]
	\centering
	\subfloat{
		\begin{tikzpicture}
		\begin{axis}[
		height=5cm,
		width=6cm,
		title={\textbf{$|\mathcal{S}|=|\mathcal{Y}|=|\mathcal{Z}|$}},
		xlabel=$|\mathcal{S}|$,
		ylabel=$UI(S;Y\backslash Z)$,
		yticklabel style={
			/pgf/number format/fixed,
			/pgf/number format/precision=5},
		xmin=2,
		xmax=12,
		xtick={2,4,6,8,10,12},
		ymin=0,
		ymax=0.08
		]
		\addplot coordinates {
			(2,0.045447558031619)         
			(3,0.069316876662204)         
			(4,0.069043696136647)         
			(5,0.050866104974674)         
			(6,0.050661574574079)         
			(7,0.045957348534485)         
			(8,0.041593214463482)         
			(9,0.040746277625364)         
			(10,0.037862119198596)
			(11,0.034760498685571)
			(12,0.033256839016617)
		};  \label{p8}
		
		\addplot coordinates {
			(2,0.045496607982562)
			(3,0.06956810569746) 
			(4,0.069051447599844) 
			(5,0.049985894585717) 
		};  \label{p9}
		
		\addplot coordinates {
			(2,0.045447790929084) 
			(3,0.069316248963056) 
			(4,0.069037961879531) 
			(5,0.050855452902266) 
			(6,0.050651441101571) 
			(7,0.045943370993516) 
			(8,0.041583073309663) 
			(9,0.040732305256225) 
			(10,0.037848576630177)  
			(11,0.034748576568929)
			(12,0.033304105596363)
		};  \label{p10}
		\end{axis}
		\end{tikzpicture}}\qquad
	\subfloat{
		\begin{tikzpicture}
		\begin{semilogyaxis}[
		log ticks with fixed point,
		height=5cm,
		width=6cm,
		title={\textbf{$|\mathcal{S}|=|\mathcal{Y}|=|\mathcal{Z}|$}},
		xlabel=$|\mathcal{S}|$,
		ylabel=time (s),
		yticklabel style={
			/pgf/number format/fixed,
			/pgf/number format/precision=5},
		xmin=2,
		xmax=12,
		xtick={2,4,6,8,10,12},
		ymax=28
		]
		\addplot coordinates {
			(2,0.286711366176605)          
			(3,1.323447477817535)          
			(4,2.889853680133819)         
			(5,3.950668849945068)         
			(6,4.890721967220307)         
			(7,5.464065339565277)         
			(8,6.205673131942749)         
			(9,6.895060749053955)         
			(10,7.582759811878204)
			(11,7.479078481197357)
			(12,8.120807814598084)
		};
		
		\addplot coordinates {
			(2,0.152317404747009) 
			(3,1.018656647205353)  
			(4,5.964616098403931)  
			(5,18.01919047355652)  
		};
		
		\addplot coordinates {
			(2,0.039378819465637)  
			(3,0.1547265625)  
			(4,0.369298918247223)  
			(5,0.703953814506531)  
			(6,1.266826131343842) 
			(7,2.112130069732666) 
			(8,3.659219310283661) 
			(9,5.795591287612915) 
			(10,9.383078887462617)
			(11,14.74226320743561)
			(12,24.89692880392074)
		};
		\end{semilogyaxis}
		\end{tikzpicture}}
	\caption{Comparison of a Python implementation of our \texttt{admUI} algorithm (\ref{p8}) with (a) the Frank-Wolfe algorithm in the Python package \texttt{dit}(\ref{p9}), and (b) a custom implementation in~\cite{DOT2017bivariate} (\ref{p10}) using the Python interior-point solver \texttt{CVXOPT}. The left panel shows the average values of the computed unique information, $UI(S;Y\setminus Z)$ for 100 distributions sampled uniformly at random from the probability simplex. The right panel shows the average computation (wall-clock) time on an Intel 2.60~GHz CPU. 
	}
	\label{fig:2}
\end{figure}
\begin{figure}[tb]
	\centering
	\begin{tikzpicture}
	\begin{axis}[
		height=5cm,
		width=8cm,
		title={\textbf{$|\mathcal{S}|=|\mathcal{Y}|=|\mathcal{Z}|=2$}},
		xlabel=Iterations,
		ylabel=$\text{log}_{10} \epsilon$,
		ymin=-12,
		ymax=-1,
		ytick={-12,-10,-8,-6,-4,-2,0},
		xmin=10,
		xmax=210,
		xtick={1,10,100,200}
		]
		\addplot coordinates {
			(2.084453383398952*100,-12)
			(1.926598846014098*100,-11)
			(1.732213561992118*100,-10)
			(1.539013767947880*100,-9)
			(1.387289736651987*100,-8)
			(1.194349423137010*100,-7)
			(1.004927344611121*100,-6)
			(78.247736093340819,-5)
			(56.343582270727950,-4)
			(36.003940667152477,-3)
			(20.257656193866595,-2)
			(11.491420634920631,-1)
		};  \label{p11}
		
		\addplot coordinates {
			(1.453308108308998*100,-12)
			(1.323805970314031*100,-11)
			(1.253765597267874*100,-10)
			(1.112784015024520*100,-9)
			(0.971917358984814*100,-8)
			(0.852615861947338*100,-7)
			(0.701769328620950*100,-6)
			(0.549715123671325*100,-5)
			(0.399822235808159*100,-4)
			(0.259931908333099*100,-3)
			(0.151507707453673*100,-2)
			(0.091649682539683*100,-1)
		};  \label{p12}
		
		\end{axis}
		\end{tikzpicture}
		\caption{Convergence of the \texttt{admUI} algorithm using the modified GIS~\eqref{eq:Iprojiteration_accsoln} with $\gamma=1/\sqrt{2}$ (\ref{p12}) and the original GIS~\eqref{eq:Iprojiteration} (\ref{p11}). For a prescribed accuracy~$\epsilon$, each point corresponds to the average number of iterations required by either method to converge for 250 distributions sampled uniformly at random from the probability simplex. 
		}
		\label{fig:3}
\end{figure}
\paragraph{Accuracy and stopping criterion.}
To test the accuracy and efficiency of the \texttt{admUI} algorithm for high-dimensional systems, we consider the \textsc{Copy} distribution: $Y$ and $Z$ are independent uniformly distributed random variables and $S=(Y,Z)$. In this case, $\Delta_P=\{P\}$,
and $UI(S;Y\setminus Z)$ is just the mutual information $I(S;Y)=H(Y)$, which can be calculated exactly, given $P$. We use this example to test the accuracy of the solutions produced by different optimizers. Table~\ref{tab:comp_admi_fmc} compares the \texttt{admUI} algorithm and \texttt{fmincon} (with gradient and Hessian included) in terms of the error and computation times for different cardinalities of $\Ycal$. We chose $\Zcal=\Ycal$ and $\Scal=\Ycal\times\Ycal$ so that overall size of the system scales as $|\Ycal|^4$. Compared to the \texttt{admUI}, the computation time and error grow at a much faster rate for \texttt{fmincon}. 
\begin{table}[!htbp]
	\centering 
	\begin{threeparttable}
		\captionsetup{justification=centering}
		\caption{Comparison of \texttt{admUI} and \texttt{fmincon} on the \textsc{Copy} example. 
		} 
		\label{tab:comp_admi_fmc} 
		\small 
		\begin{tabular}{@{\extracolsep{1pt}} cclclccc} 
			\toprule
			\multicolumn{1}{c}{Size} & \multicolumn{1}{c}{$\epsilon$} & \multicolumn{4}{c}{\texttt{admUI}} & \multicolumn{2}{c}{\texttt{fmincon}\tnote{1}}\\
			\cmidrule{3-6} \cmidrule{7-8}
			& & \multicolumn{2}{c}{Stop 1 (heuristic)}  & \multicolumn{2}{c}{Stop 2 (rigorous)}\\
			\cmidrule{3-4} \cmidrule{5-6}
			& & Error & Time (ms) & Error & Time (ms) & Error & Time (ms)\\
			\midrule  
			$\phantom{1}2^4$ & \num{e-08} & $1.94\cdot\num{e-9}$  &  \phantom{1}4.38   & $9.16\cdot\num{e-10}$ &  $9.03\cdot\num{e1}$ & $9.52\cdot\num{e-5}$ & $2.38\cdot\num{e2}$\\
			$\phantom{1}$    & \num{e-05} & $1.97\cdot\num{e-6}$     &  \phantom{1}5.36   & $6.67\cdot\num{e-7}$ &  $6.45\cdot\num{e1}$ & $\phantom{1}$ & $\phantom{1}$\\  
			$\phantom{1}$    & \num{e-03} & $1.09\cdot\num{e-4}$   &  \phantom{1}4.19   & $5.01\cdot\num{e-5}$            &  $1.03\cdot\num{e1}$ & $\phantom{1}$ & $\phantom{1}$\\  
			\midrule  
			$\phantom{1}4^4$ & \num{e-08} & $1.63\cdot\num{e-9}$  & 11.09   & $7.24\cdot\num{e-10}$ &  $2.27\cdot\num{e2}$ & $1.50\cdot\num{e-4}$ & $4.17\cdot\num{e2}$\\
			$\phantom{1}$    & \num{e-05} & $1.84\cdot\num{e-6}$     &  \phantom{1}5.77   & $5.38\cdot\num{e-7}$ &  $2.67\cdot\num{e2}$ & $\phantom{1}$ & $\phantom{1}$\\  
			$\phantom{1}$    & \num{e-03} & $1.03\cdot\num{e-4}$   &  \phantom{1}5.06   & $4.13\cdot\num{e-5}$            &  $2.59\cdot\num{e2}$ & $\phantom{1}$ & $\phantom{1}$\\
			\midrule  
			$\phantom{1}7^4$ & \num{e-08} & $3.15\cdot\num{e-9}$  &  \phantom{1}6.23   & $4.93\cdot\num{e-10}$ &  $2.42\cdot\num{e3}$ & $2.32\cdot\num{e-4}$ & $8.61\cdot\num{e3}$\\
			$\phantom{1}$    & \num{e-05} & $1.43\cdot\num{e-6}$     &  \phantom{1}4.49   & $3.71\cdot\num{e-7}$ &  $2.41\cdot\num{e3}$ & $\phantom{1}$ & $\phantom{1}$\\  
			$\phantom{1}$    & \num{e-03} & $0.81\cdot\num{e-4}$   &  \phantom{1}7.68   & $2.89\cdot\num{e-5}$            &  $1.97\cdot\num{e3}$ & $\phantom{1}$ & $\phantom{1}$\\ 
			\midrule  
			$\phantom{1}10^4$& \num{e-08} & $2.60\cdot\num{e-9}$  & 14.67   & $3.71\cdot\num{e-10}$ &  $9.38\cdot\num{e3}$ & $3.51\cdot\num{e-4}$ & $4.86\cdot\num{e5}$\\
			$\phantom{1}$    & \num{e-05} & $1.18\cdot\num{e-6}$     & 12.11   & $2.82\cdot\num{e-7}$ &  $9.20\cdot\num{e3}$ & $\phantom{1}$ & $\phantom{1}$\\  
			$\phantom{1}$    & \num{e-03} & $0.66\cdot\num{e-4}$   & 11.90   & $2.22\cdot\num{e-5}$            &  $8.73\cdot\num{e3}$ & $\phantom{1}$ & $\phantom{1}$\\
			\bottomrule
		\end{tabular} 
		\begin{tablenotes}
			\item[1] \texttt{fmincon} with gradient, Hessian, and options: Algorithm = interior-point, MaxIterations $=10^4$, MaxFunctionEvaluations $=10^5$, OptimalityTolerance = $\num{e-6}$, ConstraintTolerance = $\num{e-8}$. 
		\end{tablenotes}
	\end{threeparttable}
\end{table} 	 

For \texttt{admUI}, we consider the two stopping criteria discussed in Section~\ref{sec:computing}, with several choices of the accuracy parameter $\epsilon$. Stop~1 is the heuristic and Stop~2 is the rigorous method. The stopping criterion was evaluated in every iteration. As can be seen from the table, both criteria allow us to control the error. The heuristic has a lower computational overhead compared to the rigorous stopping criterion. On the other hand, the error bound of the rigorous criterion appears to be somewhat pessimistic, and seems to perform well even with a much larger $\epsilon$. 

\paragraph{Accelerating the $I$-projection (Step 1).} 
For 250 distributions sampled uniformly at random from the probability simplex, Figure~\ref{fig:3} compares the mean number of iterations required by the \texttt{admUI} algorithm when using the original GIS~\eqref{eq:Iprojiteration} and when using the modified  GIS~\eqref{eq:Iprojiteration_accsoln} with $\gamma=1/\sqrt{2}$ to achieve a given accuracy~$\epsilon$. The convergence of the \texttt{admUI} algorithm with the modified GIS is noticeably faster when compared to the original.

\section{Discussion}
\label{sec:discussion}
We developed an efficient algorithm to compute the decomposition of mutual information proposed in~\cite{e16042161}, 
for which the computation had remained a challenge so far. Our algorithm comes with convergence guarantees and a rigorous stopping criterion ensuring $\epsilon$-optimality of the solution. 
We tested the computation time and accuracy of our algorithm against other software. 
In a number of experiments, our algorithm is shown to perform more accurately and efficiently than previous approaches. 

One may ask whether the computational complexity of the function $UI$ prohibits its use in applications, given that already computing or estimating a mutual information is challenging. One major problem when estimating the mutual information is the difficulty in estimating the joint distribution of many variables. In this respect, $UI$ compares well, since $UI(S;Y\backslash Z)$ does not depend on the joint distribution of all variables, but only on the marginal distributions of pairs $(S,Y)$ and~$(S,Z)$. In those applications where the main problem is the estimation of the joint distribution given the data at hand, $UI$ is easier to treat than the mutual information. 

We hope that our algorithm will contribute means to test the mutual information decomposition on larger systems than was possible so far, particularly in recent applications of the decomposition, e.g., in neuroscience~\citep{pica2017quantifying}, representation learning~\citep{minsyn2017,e19090474}, robotics~\citep{ghazi2015quantifying,e19090456}, etc., which so far has been pursued either with only simpler types of measures or for very low-dimensional systems.

\appendix
\section*{Appendix}

\section{Stopping criterion}
\label{sec:stopping}
\paragraph{Outer loop with errors.}
The stopping criterion~\eqref{eq:stopping} for the outer loop of Algorithm~\ref{alg:admUI} tests $\max_{s,y,z}\log\frac{Q^{(i+1)}(y,z|s) }{Q^{(i)}(y,z|s)}\leq \epsilon$, which ensures that the objective function has reached a value within $\epsilon$ of optimal. 
We need to describe the behavior of this test when using approximations $\tilde Q^{(i)}$ and $\tilde Q^{(i+1)}$ instead of the exact distributions $Q^{(i)}$ and $Q^{(i+1)}$. 
Consider any $s,y,z$ and abbreviate $q^{(i)}\equiv Q^{(i)}(y,z|s)$ and $\tilde q^{(i)}\equiv \tilde Q^{(i)}(y,z|s)$. 
\begin{proposition}
	Let $\epsilon>0$. 
If 
	\begin{equation*}
	|\tilde q^{(i)} - q^{(i)}| \leq  \tilde q^{(i)} \frac{\epsilon}{12},  \quad 
	|\tilde q^{(i+1)} - q^{(i+1)}| \leq  \tilde q^{(i+1)} \frac{\epsilon}{12},  
\quad\text{and}\quad
	\log\frac{\tilde q^{(i+1)}}{\tilde q^{(i)}}\leq 
	\frac{\epsilon}{3}, 
	\end{equation*}
then 
	\begin{equation*}
	\log\frac{q^{(i+1)}}{q^{(i)}}\leq \epsilon. 
	\end{equation*}
\end{proposition}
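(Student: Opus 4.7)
The plan is a straightforward three-term decomposition followed by elementary logarithm inequalities. Writing
\begin{equation*}
\log\frac{q^{(i+1)}}{q^{(i)}} = \log\frac{q^{(i+1)}}{\tilde q^{(i+1)}} + \log\frac{\tilde q^{(i+1)}}{\tilde q^{(i)}} + \log\frac{\tilde q^{(i)}}{q^{(i)}},
\end{equation*}
the middle term is bounded by $\epsilon/3$ directly from the hypothesis, so everything reduces to controlling the two ``error'' terms from replacing the exact $q$'s by their approximations $\tilde q$'s. The budget remaining is $\epsilon - \epsilon/3 = 2\epsilon/3$, which should easily absorb errors of relative size $\epsilon/12$.

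For the first and third terms, I would rewrite the bound $|\tilde q - q|\leq \tilde q\,\epsilon/12$ as $q/\tilde q \in [1 - \epsilon/12,\,1+\epsilon/12]$ (for either iterate). This gives
\begin{equation*}
\log\frac{q^{(i+1)}}{\tilde q^{(i+1)}} \leq \log\!\bigl(1 + \tfrac{\epsilon}{12}\bigr) \leq \tfrac{\epsilon}{12}, \qquad \log\frac{\tilde q^{(i)}}{q^{(i)}} = -\log\frac{q^{(i)}}{\tilde q^{(i)}} \leq -\log\!\bigl(1 - \tfrac{\epsilon}{12}\bigr),
\end{equation*}
using $\log(1+x)\leq x$ for the former. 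For the latter I would apply the inequality $-\log(1-x) \leq x/(1-x)$, valid for $x \in (0,1)$, giving $-\log(1-\epsilon/12) \leq \epsilon/(12-\epsilon) \leq \epsilon/6$ as soon as $\epsilon \leq 6$ (which we can assume, since the interesting regime is $\epsilon$ small and we may otherwise WLOG restrict to $\epsilon \leq 1$ say). Adding the three contributions yields
\begin{equation*}
\log\frac{q^{(i+1)}}{q^{(i)}} \leq \tfrac{\epsilon}{12} + \tfrac{\epsilon}{3} + \tfrac{\epsilon}{6} = \tfrac{7\epsilon}{12} \leq \epsilon,
\end{equation*}
as desired.

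There is no real obstacle here; the only subtlety is to make sure the direction of the one-sided logarithmic bound is correct in each of the two ``error'' terms (the sign flip on the third term is what forces us to use the sharper $-\log(1-x)\leq x/(1-x)$ bound rather than the cleaner $\log(1+x)\leq x$), and to verify that the constant $12$ in the hypothesis is large enough to leave the required slack $2\epsilon/3$. The factor $12$ was presumably chosen precisely so that this decomposition works, and indeed the arithmetic above shows the total fits inside $\epsilon$ with room to spare. No further assumptions on $P$ or on the iteration are needed for this particular step.
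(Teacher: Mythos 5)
Your proof is correct and is precisely the ``direct evaluation'' the paper's one-line proof alludes to: the three-term telescoping of the logarithm combined with the elementary bounds $\log(1+x)\le x$ and $-\log(1-x)\le x/(1-x)$ yields $\tfrac{7\epsilon}{12}\le\epsilon$. The restriction to small $\epsilon$ that you flag is harmless in the intended regime ($\epsilon=10^{-6}$ in the experiments) and indeed some such restriction is genuinely needed, since for $\epsilon\ge 12$ the hypotheses would permit $q^{(i)}=0$ and the conclusion would fail.
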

\begin{proof}
	By direct evaluation. 
\end{proof}
In turn, testing the stopping criterion with $\epsilon_1\leq \frac{\epsilon}{3}$ allows us to conclude $\epsilon$-optimality, if the approximate distributions plugged in are within $\epsilon_0\leq \min\{ \tilde q^{(i)}, \tilde q^{(i+1)}\}\frac{\epsilon}{12}$ of the actual distributions, in each entry. 

\paragraph{Inner loop.}
Now we want to find a criterion to interrupt the iteration from Algorithm~\ref{alg:Iproj} with the guarantee that $|\tilde q - q|\leq \epsilon_0$ for some prespecified $\epsilon_0$. 

Note that the optimization in Theorem~\ref{theorem:innerloop} takes place over the set of distributions of the form $\frac{1}{Z(R_{YZ},q_Y,q_Z)}R_{YZ}(y,z)q_Y(y)q_Z(z)$, where $q_Y$ and $q_Z$ are arbitrary probability distributions over $Y$ and $Z$ respectively, $R$ is the distribution that we want to approximate with a distribution from the linear family $\Delta_{P,s}$, and $Z(R_{YZ},q_Y,q_Z)$ is the normalizing partition function. This is an exponential family with sufficient statistics $\mathds{1}_{y'}$, $y'\in Y$, $\mathds{1}_{z'}$, $z'\in Z$, computing the marginal distributions on $Y$ and $Z$. 
(This is similar to an independence model, but with a non uniform reference measure.) The solution to this optimization problem is the unique distribution $Q_{YZ}$ within the exponential family, that is also contained in $\Delta_{P,s}$, meaning that its marginal distributions (which correspond to the expectation parameters) satisfy $Q_Y(y) = \eta_y = P_{Y|S}(y|s)$ and $Q_Z(z)=\eta_z =P_{Z|S}(z|s)$. We want to bound the error $|\tilde q - q|$ in terms of the error $|\tilde \eta - \eta|$ of the expectation parameters. 

\begin{conjecture}
	\label{conj}
	$\|\tilde q - q\|_\infty \leq  \|\tilde \eta - \eta\|_1$. 
\end{conjecture}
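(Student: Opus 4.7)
My plan is to prove the conjecture by interpolating in the expectation-parameter space within the common exponential family containing $q$ and $\tilde q$, and bounding the Jacobian of the parameterization $\pi:\eta\mapsto q$ pointwise. Both $q$ and $\tilde q$ lie in $\mathcal{E}_R=\{R_{YZ}(y,z)q_Y(y)q_Z(z)/Z\}$, and $\pi$ is a diffeomorphism on the interior of the marginal polytope (as the paragraph preceding the conjecture already spells out). Form the linear interpolation $\eta_t:=(1-t)\tilde\eta+t\eta$ for $t\in[0,1]$; both endpoints satisfy the normalization $\sum_y\eta_Y(y)=\sum_z\eta_Z(z)=1$, so $\eta_t$ remains a valid expectation parameter throughout. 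Letting $q_t:=\pi(\eta_t)$, the fundamental theorem of calculus combined with $\ell_1$-$\ell_\infty$ duality gives
\begin{equation*}
|q(y,z)-\tilde q(y,z)| = \left|\int_0^1 \langle \nabla_\eta q_t(y,z),\,\eta-\tilde\eta\rangle\,dt\right| \leq \|\eta-\tilde\eta\|_1 \cdot \sup_{t\in[0,1]}\|\nabla_\eta q_t(y,z)\|_\infty,
\end{equation*}
reducing the conjecture to the pointwise bound $\|\nabla_\eta q(y,z)\|_\infty\leq 1$ for all $(y,z)$ and all admissible $\eta$.

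The main obstacle is this Jacobian estimate. Introduce natural parameters $\theta=(\theta_Y,\theta_Z)$ with $q(y,z)\propto R_{YZ}(y,z)\exp(\theta_Y(y)+\theta_Z(z))$, fixing the gauge via $\theta_Y(1)=\theta_Z(1)=0$. Direct computation gives $\partial q(y,z)/\partial\theta_Y(y'')=q(y,z)[\mathds{1}_{y=y''}-\eta_Y(y'')]$ and similarly for $\theta_Z$, while the Fisher matrix $F=\partial\eta/\partial\theta$ takes the explicit block form with diagonal blocks $\mathrm{diag}(\eta_Y)-\eta_Y\eta_Y^\top$, $\mathrm{diag}(\eta_Z)-\eta_Z\eta_Z^\top$ and off-diagonal block $[q(y,z)-\eta_Y(y)\eta_Z(z)]_{y,z}$. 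The chain rule gives $\nabla_\eta q(y,z)=(\nabla_\theta q(y,z))\cdot F^{-1}$, and the task is to show each entry of this product has modulus at most $1$. Since $F^{-1}$ admits no clean closed form, I would attempt either a Schur-complement block inversion using the conditional $q$-covariance, or---more promisingly---exploit the factorizations $q(y,z)=\eta_Y(y)\pi(z\mid y)=\eta_Z(z)\sigma(y\mid z)$ to interpret the entries of $\nabla_\eta q$ as Dobrushin-style coefficients measuring how a perturbation of one marginal propagates through the conditional kernels $\pi$ and $\sigma$.

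As a sanity check, the bound is tight in the independence case $R_{YZ}\equiv\mathrm{const}$, where $q=\eta_Y\otimes\eta_Z$ and the inequality reduces to $|\eta_Y(y)\eta_Z(z)-\tilde\eta_Y(y)\tilde\eta_Z(z)|\leq|\eta_Y(y)-\tilde\eta_Y(y)|+|\eta_Z(z)-\tilde\eta_Z(z)|$, immediate from $\eta\leq 1$ and the triangle inequality. That the constant $1$ is universal in $R_{YZ}$ suggests the correct argument is structural rather than a brute-force calculation; nevertheless, carrying out either the Fisher-inverse computation or the Dobrushin-type contraction in full generality is the genuine technical hurdle.
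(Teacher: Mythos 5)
You should first note that the paper itself does not prove this statement: it is explicitly left as a conjecture, supported only by the remark that ``extensive computer experiments seem to confirm'' it. So there is no proof in the paper to match yours against, and the question is only whether your argument actually closes the gap. It does not. Your reduction --- interpolate linearly in $\eta$, apply the fundamental theorem of calculus and H\"older's inequality, and thereby reduce the conjecture to the pointwise Jacobian bound $\|\nabla_\eta q(y,z)\|_\infty\leq 1$ --- is a sensible and correct-in-principle first step (the marginal polytope is convex, so $\eta_t$ stays admissible, and $\pi$ is smooth on the interior). But the entire difficulty of the conjecture is concentrated in that Jacobian bound, and you explicitly leave it unproved: you describe two possible attacks (Schur-complement inversion of the Fisher matrix, or a Dobrushin-type contraction through the conditional kernels) without carrying either out. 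Verifying the bound in the independence case $R_{YZ}\equiv\mathrm{const}$ is a useful sanity check but proves nothing about general $R_{YZ}$, which is where the nontrivial coupling between the $Y$- and $Z$-marginals enters through the off-diagonal Fisher block.

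Two further points to attend to if you pursue this. First, the sufficient statistics $\mathds{1}_{y'},\mathds{1}_{z'}$ are linearly dependent, so $\eta$ lives on an affine subspace and the paper's $\ell_1$ norm is taken over the reduced coordinates (summing over $\Ycal\setminus\{1\}$ and $\Zcal\setminus\{1\}$); the gradient $\nabla_\eta q(y,z)$ is only defined up to the normal space of that affine subspace, and the value of $\|\nabla_\eta q(y,z)\|_\infty$ --- hence whether the constant $1$ is achievable --- depends on which representative and which coordinate system you fix. Your gauge choice must be made consistent with the norm in the statement before the H\"older step is meaningful. Second, even granting all that, you would at best prove the conjecture, not reproduce an argument from the paper; as written, your proposal is a plausible research plan with the central inequality still open, not a proof.
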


Extensive computer experiments seem to confirm that Conjecture~\ref{conj} is true. 
Assuming this, the stopping criterion is 
\begin{align*}
\|\tilde \eta - \eta\|_1 =
& \sum_{z\in Z\setminus \{1\}} |(\sum_y \frac{1}{Z}R_{YZ}(y,z)q_Y(y)q_Z(z)) - P_{Z|S}(z|s)| \\
&+ \sum_{y\in Y\setminus \{1\}}|(\sum_z \frac{1}{Z}R_{YZ}(y,z)q_Y(y)q_Z(z)) - P_{Y|S}(z|s)|\\
\leq&  \epsilon_0. 
\end{align*}

Summarizing, we can guarantee $\epsilon$-optimality of the overall optimization, if the outer loop is interrupted when $\log\frac{\tilde q^{(i+1)} }{\tilde q^{(i)}}\leq 
 \frac{\epsilon}{3}$, 
and the inner loop is interrupted when 
$\frac{\|\tilde \eta^{(i)} - \eta\|_1 }{ \min \tilde{q}^{(i)} } \leq \frac{\epsilon}{12}$. 

\section{Time complexity}
\label{sec:time_complexity}
The convergence analysis of the GIS is similar to that of the classical BAA~\citep{1054855,arimoto1972} (see below). We have the following proposition.
\begin{proposition}
\label{prop:time_complexity}
Let~$b_n$ be the nonnegative functions defined recursively on~$\Ycal\times\Zcal$ by~\eqref{eq:Iprojiteration} and let~$b^*(j)=\lim_{n\to\infty} b_n(j),\text{ }j\in\Ycal\times\Zcal$. Then the approximation error~$\lvert D(b_{n+1}\|b_0)-D(b^*\|b_0)\rvert$ is upper-bounded by~$\tfrac{D(b^*\|b_0)}{n}$. In particular, if~$b_0$ is the uniform distribution, then the error bound is of the form~$\mathcal{O}(\tfrac{\log{\left(|\Ycal||\Zcal|\right)}}{n})$.
\end{proposition}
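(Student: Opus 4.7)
The plan is to mirror the convergence analysis of the Blahut--Arimoto algorithm~\citep{1054855,arimoto1972}, as suggested in the sentence preceding the proposition. The proof centers on a Pythagorean-type descent identity for the symmetric GIS update, followed by a telescoping argument on the relative entropy gap $D(b^*\|b_n)$, and then converts that dual-side descent into a bound on the primal gap $D(b^*\|b_0)-D(b_{n+1}\|b_0)$.

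The key one-step identity comes directly from the update rule. Taking logarithms in~\eqref{eq:Iprojiteration} gives
\begin{equation*}
\log\frac{b_{n+1}(y,z)}{b_n(y,z)} \,=\, \tfrac{1}{2}\log\frac{P_{Y|S}(y|s)}{b_{n,Y}(y)} + \tfrac{1}{2}\log\frac{P_{Z|S}(z|s)}{b_{n,Z}(z)},
\end{equation*}
where $b_{n,Y}(y)=\sum_z b_n(y,z)$ and $b_{n,Z}(z)=\sum_y b_n(y,z)$. Multiplying by $b^*(y,z)$, summing over $(y,z)$, and using that $b^*\in\Delta_{P,s}$ has marginals $P_{Y|S}(\cdot|s)$ and $P_{Z|S}(\cdot|s)$, yields
\begin{equation*}
D(b^*\|b_n) - D(b^*\|b_{n+1}) \,=\, \tfrac{1}{2}D\bigl(P_{Y|S}(\cdot|s)\,\bigl\|\,b_{n,Y}\bigr) + \tfrac{1}{2}D\bigl(P_{Z|S}(\cdot|s)\,\bigl\|\,b_{n,Z}\bigr) \,\geq\, 0.
\end{equation*}
This is the Pythagorean descent for the symmetric GIS and immediately gives monotonicity of $D(b^*\|b_n)$ (and, by a parallel exponential-family argument, monotonicity of the primal gap).

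Next I would link this dual descent to the primal gap. Since every iterate $b_n$ lies in the exponential family $\{c\,b_0(y,z)\alpha(y)\beta(z)\}$ through $b_0$, and $b^*$ is the unique member of this family in $\Delta_{P,s}$, the three-points identity for exponential families gives
\begin{equation*}
D(b^*\|b_0) - D(b_{n+1}\|b_0) \,\leq\, D(b^*\|b_n) - D(b^*\|b_{n+1}).
\end{equation*}
Telescoping from $k=0$ to $n-1$ and using $D(b^*\|b_n)\geq 0$ produces $\sum_{k=0}^{n-1}\bigl[D(b^*\|b_0)-D(b_{k+1}\|b_0)\bigr] \leq D(b^*\|b_0)$, and monotonicity of the primal gap upgrades this average bound to the claimed worst-case estimate $|D(b_{n+1}\|b_0)-D(b^*\|b_0)|\leq D(b^*\|b_0)/n$. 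The last assertion then follows since $D(b^*\|b_0) = \log(|\Ycal||\Zcal|)-H(b^*)\leq\log(|\Ycal||\Zcal|)$ whenever $b_0$ is uniform on $\Ycal\times\Zcal$.

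The main obstacle is the inequality connecting the primal gap to the Pythagorean descent. Because the symmetric (square-root) form of GIS does not correspond to a pure I-projection onto a single linear family per iteration, and because the iterates $b_n$ may fail to be normalized, the exponential-family three-points identity must be applied with care, possibly by working with extended divergence for non-negative measures or by explicitly tracking the partition function along the iterates. Once that inequality is established, the remaining telescoping and the uniform-$b_0$ bound are essentially routine.
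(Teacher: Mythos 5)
Your overall architecture---a one-step descent identity for $D(b^*\|b_n)$, telescoping, and a monotonicity argument in the style of \citet[Corollary~1]{arimoto1972}---is the right skeleton, and it is essentially what the paper's one-line proof delegates to \citet{SullivanAM} and \citet{arimoto1972}. Your first step is correct: summing $b^*(y,z)\log\bigl(b_{n+1}(y,z)/b_n(y,z)\bigr)$ over $(y,z)$ and using that $b^*\in\Delta_{P,s}$ gives $D(b^*\|b_n)-D(b^*\|b_{n+1})=\tfrac{1}{2}D\bigl(P_{Y|S}(\cdot|s)\|b_{n,Y}\bigr)+\tfrac{1}{2}D\bigl(P_{Z|S}(\cdot|s)\|b_{n,Z}\bigr)\ge 0$ (the nonnegativity needs $\sum_{y,z}b_n(y,z)\le 1$, which does hold for the square-root update by Cauchy--Schwarz), and the uniform-$b_0$ estimate $D(b^*\|b_0)\le\log(|\Ycal||\Zcal|)$ is fine.

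However, the step you yourself flag as the main obstacle---the linking inequality $D(b^*\|b_0)-D(b_{n+1}\|b_0)\le D(b^*\|b_n)-D(b^*\|b_{n+1})$---is not merely delicate: it is false, so no amount of care with partition functions or extended divergences will rescue it as stated. Take $\Ycal=\Zcal=\{0,1\}$, $b_0=(0.5,0.2,0.2,0.1)$ in the order $(0,0),(0,1),(1,0),(1,1)$, and target marginals $(\tfrac{1}{2},\tfrac{1}{2})$ for both $Y$ and $Z$. Then $b^*\approx(0.264,0.236,0.236,0.264)$ with $D(b^*\|b_0)\approx 0.166$ (natural logarithms), while $b_1\approx(0.357,0.218,0.218,0.167)$ gives $D(b_1\|b_0)\approx 0.003$; so at $n=0$ your left-hand side is $\approx 0.163$ (still $\approx 0.12$ if you normalize $b_1$ or use the extended divergence for nonnegative measures), whereas the right-hand side is $D\bigl((\tfrac{1}{2},\tfrac{1}{2})\|(0.7,0.3)\bigr)\approx 0.087$. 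The violation persists at $n=1$ with a shrinking ratio, which suggests the true linking inequality carries a constant (apparently close to $2$) and would then only yield $2D(b^*\|b_0)/n$. The deeper issue is that your telescoping controls the dual quantity $D(b^*\|b_n)$, whereas the proposition concerns the primal value $D(b_{n+1}\|b_0)$; in Arimoto's Corollary~1 the quantity $C(n)$ that decays like $\log J/n$ is a specially constructed functional of the iterate for which the analogous linking inequality holds by design, not the naive objective evaluated at the iterate. Identifying the correct functional here (or proving a linking inequality with an explicit constant) is exactly the content that is missing from your argument.
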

\begin{proof}
See~\citet[Section 5]{SullivanAM} and~\citet[Corollary 1, p.~17]{arimoto1972}.	
\end{proof}
The time complexity analysis for the $I$-projection is not uniform, with the error bound $\tfrac{D(b^*\|b_0)}{n}$ depending on the distribution~$b_0$ that is being projected. Each iteration~\eqref{eq:Iprojiteration} costs~$\mathcal{O}(|\Ycal||\Zcal|)$ operations.
Hence by Proposition~\ref{prop:time_complexity}, the time complexity of finding the $I$-projection (Algorithm~\ref{alg:Iproj}) to within~$\epsilon_1$ of the true solution is~$\mathcal{O}(\tfrac{|\Ycal||\Zcal|\log{\left(|\Ycal||\Zcal|\right)}}{\epsilon_1})$. 
The $I$-projection needs to be evaluated for each~$s\in\Scal$ in Step~1 in Algorithm~\ref{alg:admUI}.
Hence, the complexity of one regular iteration of Step~1 is about~$\mathcal{O}(\tfrac{|\Scal||\Ycal||\Zcal|\log{\left(|\Ycal||\Zcal|\right)}}{\epsilon_1})$.  
The complexity of Step~2 is~$\mathcal{O}(|\Scal||\Ycal||\Zcal|)$ so that the overall complexity of one iteration of Algorithm~\ref{alg:admUI} is dominated by Step~1.

\paragraph{GIS and the classical BAA.} It is instructive to compare the time complexity of the GIS with that of the classical BAA. 
Let $J=|\Ycal||\Zcal|$,  $K=|\Ycal|+|\Zcal|$. 
Given a value~$s\in\Scal$, the set~$\Delta_{P,s}$~\eqref{eq:delPs} is the fiber of the linear map~$f_A: \mathbb{P}_{\Ycal\times\Zcal} \to \mathbb{P}_{\Ycal}\times\mathbb{P}_{\Zcal}$ passing through a given point~$P_{YZ|s}$.
Write~$A$ as the column-stochastic matrix~$(a_{kj})_{k,j}\in\mathbb{R}_+^{K\times J}$ describing~$f_A$ and let~$\alpha=AP_{YZ|s}\in\mathbb{R}_+^{K}$. Then~$\Delta_{P,s}=(P_{YZ|s}+\ker{A})\cap\mathbb{P}_{\Ycal\times\Zcal}=\{b\in\mathbb{P}_{\Ycal\times\Zcal}:Ab=\alpha\}$. As a concrete example, for~$\Ycal=\{0,1\}$, $\Zcal=\{0,1,2\}$, $A$ is the~$5\times 6$ matrix 
$$A=\begin{blockarray}{cccccc}
\matindex{00} & \matindex{01} & \matindex{02} & \matindex{10} & \matindex{11} & \matindex{12} & \\
\begin{block}{(cccccc)}
.5 & .5 & .5 & 0 & 0 & 0\\
0 &  0 &  0 &.5 &.5 &.5\\
.5 &  0 &  0 &.5 & 0 & 0\\
0 & .5 &  0 & 0 &.5 & 0\\
0 &  0 & .5 & 0 & 0 &.5\\
\end{block}
\end{blockarray}.$$
Rewrite~\eqref{eq:Iprojiteration} as
\begin{align}
b_0(j) &= R_{YZ}(j),\notag\\
b_{n+1}(j) &= b_n(j) \prod_{k}\bigg(\frac{\alpha_k}{\sum_{j'} a_{kj'}b_n(j')}\bigg)^{a_{kj}}
= b_n(j) \prod_{k}\bigg(\frac{a_{kj}}{\alpha_k}\bigg)^{-a_{kj}}\prod_{k}\bigg(\frac{a_{kj}}{\sum_{j'} a_{kj'}b_n(j')}\bigg)^{a_{kj}}\notag\\
&= b_n(j)\exp{(-D(A_j\|\alpha))}\prod_{k}\bigg(\frac{a_{kj}}{\sum_{j'} a_{kj'}b_n(j')}\bigg)^{a_{kj}},
\label{eq:BACC_link}
\end{align}
where $A_j$ is the $j$th column of $A$. If for all $j$ such that $b(j)>0$, $D(A_j\|\alpha)$ is a constant, then~\eqref{eq:BACC_link} reduces to the classical BAA~\citep{1054855,arimoto1972} for computing the capacity of a channel with transition matrix~$A$.
This is related to the Kuhn-Tucker conditions for finding the capacity-achieving input distribution: A distribution $b^*$ achieves capacity if and only if there exists a number $\mathcal{C}$ such that $D(A_j\|Ab^*)=\mathcal{C}$ for all $j$ such that $b^*(j)>0$ and $D(A_j\|Ab^*)<\mathcal{C}$ for all other $j$~\citep{SullivanAM}. The number~$\mathcal{C}$ is the channel capacity.
The standard BAA iteration~\citep{1054855,arimoto1972} for a channel with transition matrix~$A=(a_{kj})_{k,j}\in\mathbb{R}_+^{K\times J}$ is
\begin{align}
b_{n+1}(j) &= b_n(j) \prod_{k}\bigg(\frac{a_{kj}}{\sum_{j'} a_{kj'}b_n(j')}\bigg)^{a_{kj}},
\label{eq:BACC}
\end{align}
where $b_0$ is chosen from the interior of $\mathbb{P}_{\Ycal\times\Zcal}$. Each iteration~\eqref{eq:BACC_link} costs~$\mathcal{O}(KJ)$ operations. The approximation error scales inversely with the number of iterations~$n$ and is of the form~$\mathcal{O}(\tfrac{\log{J}}{n})$~\citep[Corollary 1, p.~17]{arimoto1972} so that the time complexity of finding the capacity to within~$\epsilon$ of the true solution is~$\mathcal{O}(KJ\tfrac{\log{J}}{\epsilon})$. 

Compare this with the GIS which achieves $\epsilon$-optimality with a complexity of~$\mathcal{O}(J\tfrac{\log{J}}{\epsilon})$. This can be attributed to the special structure of the matrix $A$ (see example above) that corresponds to the fiber polytope of the independence model.

\section{Proximal point formulation of the I-projection step}
\label{sec:proximal}
The GIS iteration~\eqref{eq:Iprojiteration} in Step 1 can be written as (see, e.g.,~\cite{SullivanAM})
\begin{align}
b_{n+1}(y,z) = \argmax_{b(y,z)} \Bigg\{\sum_{y,z}&\frac{b(y,z)}{2}\log\frac{P_{Y|S}(y|s)P_{Z|S}(z|s)}{\sum_z b_n(y,z)\sum_y b_n(y,z)}-\gamma D(b(y,z)\|b_n(y,z))\Bigg\},
\label{eq:Iprojiteration_acc}
\end{align}
with the tuning parameter $\gamma$ set to 1. The $D(b(y,z)\|b_n(y,z))$ term can be interpreted as a regularization term that penalizes updates $b_{n+1}$ that stray away from the vicinity of $b_n$. Selecting $\gamma<1$ has the potential for accelerating convergence. Techniques of this type are referred to as proximal point algorithms. Such a proximal point formulation with an adaptive tuning parameter~$\gamma$ has been used, for instance, to accelerate the convergence of the classical Blahut-Arimoto algorithm for computing the channel capacity~\cite{matz2004}. 
Using the method of Lagrange multipliers, it is easy to show that the solution to~\eqref{eq:Iprojiteration_acc} is given by~\eqref{eq:Iprojiteration_accsoln}.
In Section~\ref{sec:experiments}, we report some preliminary findings on accelerated convergence of Algorithm~\ref{alg:Iproj} using a value of the tuning parameter $\gamma < 1$ in~\eqref{eq:Iprojiteration_accsoln}. 

\section{Properties of the Unique information}
\label{sec:properties}
The decomposition of the mutual information~\eqref{eq:MI-decomposition} is motivated by an operational interpretation of the unique information. We briefly discuss the properties of the definition~\eqref{eq:uidefinition}.

Intuitively, if $Y$ has some unique information about $S$ (that is not known to $Z$), then there must be some way to exploit this information. Conversely, if $Z$ knows everything that $Y$ knows about $S$, then $Y$ can have no unique information about $S$. The following property formalizes this intuition~\cite[Lemma 6]{e16042161}.
\begin{itemize}
	\item[\textbf{(P1)}] Given $(S,Y,Z)\sim P$, $UI(S;Y\setminus Z)$ vanishes if and only if there exists a random variable $Y'$ such that the pairs $(S,Y)$ and $(S,Y')$ have the same distribution, and $S-Z-Y'$ is a Markov chain.
\end{itemize}
Blackwell’s theorem~\citep{Blackwell1953} implies that this property is equivalent to the fact that any decision problem in which the objective is to predict $S$ can be solved just as well with the knowledge of $Z$ as with the knowledge of $Y$. 
See~\cite{e16042161,BlackwellISIT} for a more detailed discussion. \textbf{(P1)} depends only on the channels $P_{Y|S}$ and $P_{Z|S}$ and thus on the marginals $P_{SY}$ and $P_{SZ}$.
One can also argue that any measure of unique information should depend only on $P_{SY}$ and $P_{SZ}$, but not on the full joint $P$. This is satisfied by the function~$UI$ since,
\begin{itemize}
	\item[\textbf{(P2)}] The functions $Q\mapsto UI_{Q}(S;Y\setminus Z)$ and $Q\mapsto UI_{Q}(S;Z\setminus Y)$ are constant in $\Delta_{P}$.
\end{itemize}
Since $UI$ satisfies \textbf{(P2)}, so does the function~$SI$. This follows from~\eqref{eq:MI-decomposition-2}. Only the function~$CI$ depends on the full joint~$P$. 

Like the mutual information, $UI$, $SI$, and $CI$ also satisfy an intuitive additivity property when evaluated on i.i.d. repetitions~\cite[Lemma 19]{e16042161}.
\begin{itemize}
	\item[\textbf{(P3)}] For sequences~$S^n=(S_1,\ldots,S_n)$,~$Y^n=(Y_1,\ldots,Y_n)$,~$Z^n=(Z_1,\ldots,Z_n)$ drawn i.i.d~$\sim P$, we have,~$UI(S^n;Y^n\setminus Z^n)=nUI(S;Y\setminus Z)$,~$UI(S^n;Z^n\setminus Y^n)=nUI(S;Z\setminus Y)$,~$SI(S^n;Y^n,Z^n)=nSI(S;Y,Z)$,~$CI(S^n;Y^n,Z^n)=nCI(S;Y,Z)$.   
\end{itemize}

The following properties are analogous to the data processing inequalities for the mutual information:
\begin{itemize}    
	\item[\textbf{(P4)}] Monotonicity under coarse-graining: Let $S'$, $Y'$ and $Z'$ be functions of $S$, $Y$, and $Z$, respectively. Then, the following data processing-like inequalities hold:
	\begin{itemize}
		\item $UI(S;Y\setminus Z) \ge UI(S';Y\setminus Z)$, 
		\item $UI(S;Y\setminus Z) \ge UI(S;Y'\setminus Z)$, 
		\item $UI(S;Y\setminus Z) \le UI(S;Y\setminus Z')$.
	\end{itemize}
\end{itemize}
See~\cite{e16042161,ISIT_RBOJ14} for further properties. 

Other nonnegative decompositions proposed so far, notably the information-geometric approach in~\cite{HarderSalgePolani2013:Bivariate_redundancy} and the approach in~\cite{GriffithKoch2014:Quantifying_Synergistic_MI} satisfy~\textbf{(P1)} and~\textbf{(P2)}.~\textbf{(P1)} is satisfied only by the decompositions in~\cite{HarderSalgePolani2013:Bivariate_redundancy,GriffithKoch2014:Quantifying_Synergistic_MI}. 
Notably,~\cite{HarderSalgePolani2013:Bivariate_redundancy,WilliamsBeer} do not satisfy~\textbf{(P3)}.  

\section{Notation}
\label{sec:notation}
We use capital letters to denote random variables and script for the corresponding finite alphabets. We write $P_S$ for the probability distribution of $S$, which is a vector with entries  $P_S(s)$, $s\in\mathcal{S}$. 
The support of $P_S$ is the set $\supp(P_S)=\{s\in\Scal:P_S(s)\neq 0\}$. 
The set of all probability measures on $\Scal$ is denoted $\mathbb{P}_{\Scal}$.  
A transition probability kernel from $\Scal$ to $\Ycal$ is a measurable function $P_{Y|S}:\Scal \to \mathbb{P}_{\Ycal}$, represented by a matrix with columns $P_{Y|S=s}=P_{Y|s}\in\mathbb{P}_\Ycal$, $s\in\Scal$. 

We use the following quantities: 
\begin{itemize}
	\item The entropy $H(P_S)$ of a distribution $P_S\in \mathbb{P}_{\Scal}$ is $H(P_S)=-\sum_{s\in\Scal}P_S(s)\log{P_S(s)}$.
    \item Given $P_S$,$Q_S\in\mathbb{P}_{\Scal}$, the Kullback-Leibler divergence from $P_S$ to $Q_S$ is $D(P_S\|Q_S)=\sum_{s\in \Scal} P_S(s)\log\tfrac{P_S(s)}{Q_S(s)}$, if $\text{supp}(Q_S)\supseteq\text{supp}(P_S)$, $+\infty$ otherwise.
   \item The conditional divergence is 
     \begin{equation*}
        D(P_{Y|S}\|Q_{Y|S}|P_S):= \E_{s\sim P_S}[D(P_{Y|S=s}\|Q_{Y|S=s})]. 
     \end{equation*}
   \item The mutual information of two random variables $S$ and $Y$ is $I(S;Y)=D(P_{SY}\|P_SP_Y)$. Equivalently, $I(S;Y)=D(P_{Y|S}\|P_Y|P_S)=D(P_{S|Y}\|P_S|P_Y)$. We use a subscript to specify the underlying distribution with respect to which the functionals are computed, e.g., $I_Q(S;Y|Z)$, under $Q=Q_{SYZ}$. 
   \item The conditional mutual information of $S$ and $Y$ given $Z$ is 
    \begin{align*}
      I_Q(S; Y | Z) = \sum_{z} Q_Z(z) \sum_{s,y}Q_{SY|Z}(s,y|z) \log\tfrac{Q_{SY|Z}(s,y|z)}{Q_{S|Z}(s|z)Q_{Y|Z}(y|z)}. 
    \end{align*}
\end{itemize}


\bibliography{R3paper}
\bibliographystyle{abbrvnat}

\end{document}